\pgfplotsset{compat=1.17}
\newcommand{\ignore}[1]{}
\newcommand{\eqdef}{\stackrel{\mathit{def}}{=}}
\newcommand{\Reals}{\mathbb{R}}
\newcommand{\PosReals}{\Reals_{\ge 0}}
\newcommand{\Segments}{{\cal S}}
\newcommand{\segnorm}[1]{|\!| #1 |\!|}
\newcommand{\subseg}[3]{{#1}{[#2,#3]}}
\newcommand{\pre}[1]{{^\bullet\!{#1}}}
\newcommand{\post}[1]{{{#1}^\bullet}}
\newcommand{\segment}[1]{{#1}.\mathit{seg}}
\newcommand{\Pos}[1]{\mathit{Pos}_{#1}}
\newcommand{\edgeride}[4]{{#2} \xrightarrow{{#3}}_{#1} {#4}}
\newcommand{\ride}[4]{{#2} \stackrel{#3}\rightsquigarrow_{#1} {#4}}
\newcommand{\crossing}{\bowtie}
\newcommand{\att}[2]{\mathit{#2}_{#1}}
\newcommand{\po}[1]{\att{#1}{p}}
\newcommand{\speed}[1]{\att{#1}{v}}
\newcommand{\limit}[1]{\att{#1}{\pi}}
\newcommand{\cd}[1]{\att{#1}{cd}}
\newcommand{\fo}[1]{\att{#1}{fo}}
\newcommand{\cfg}{\mathit{cfg}}
\newcommand{\scfg}{\mathit{scfg}}
\newcommand{\Vehicles}{{\cal C}}
\newcommand{\Objects}{{\cal H}}
\newcommand{\Brake}{\mathsf{B}}
\newcommand{\DeltaV}{\mathsf{C}}
\newcommand{\follow}[1]{\mathsf{follow}({#1})}
\newcommand{\timeTo}[1]{\mathsf{tt}({#1})}
\newcommand{\clf}{\mathit{cl}}
\newcommand{\initkw}{\mathbf{init}}
\newcommand{\dokw}{\mathbf{if}}
\newcommand{\odkw}{\mathbf{fi}}
\newcommand{\mytextcircled}[1]
           {\raisebox{.5pt}{\textcircled{\raisebox{-.9pt} {#1}}}}
\newcommand{\timeToRed}[1]{\att{#1}{ttred}}
\newcommand{\timeToGreen}[1]{\att{#1}{ttgreen}}
\begin{document}

\usetikzlibrary{math}

\title{Compositionally Safe Construction of Autonomous Driving Systems}

\author{Marius Bozga\orcidID{0000-0003-4412-5684} \and
  Joseph Sifakis\orcidID{0000-0003-2447-7981}}

\authorrunning{M. Bozga, J. Sifakis}

\institute{Univ. Grenoble Alpes, CNRS, Grenoble
  INP\footnote{Institute of Engineering Univ. Grenoble Alpes},
  VERIMAG, 38000 Grenoble, France
  \email{\{Marius.Bozga,Joseph.Sifakis\}@univ-grenoble-alpes.fr}\\
  \url{http://www-verimag.imag.fr/}}

\maketitle

\begin{abstract}
  Developing safe autonomous driving systems is a major scientific and
technical challenge. Existing AI-based end-to-end solutions do not
offer the necessary safety guarantees, while traditional systems
engineering approaches are defeated by the complexity of the problem.
We study a method for building compositionally safe autonomous driving
systems, based on the assumption that the capability to drive boils
down to the coordinated execution of a given set of driving
operations. The assumption is substantiated by a compositionality
result considering that autopilots are dynamic systems receiving a
small number of types of driving configurations as input, each
configuration defining a free space in its neighborhood. It is shown
that safe driving for each type of configuration in the corresponding
free space, implies safe driving for any possible scenario under some
easy-to-check conditions concerning the transition between
configurations.  The designed autopilot comprises distinct control
policies one per type of driving configurations, articulated in two
consecutive phases. The first phase consists of carefully managing a
potentially risky situation by virtually reducing speed, while the
second phase consists of exiting the situation by accelerating.
The autopilots designed use for their predictions simple functions
characterizing the acceleration and deceleration capabilities of the
vehicles. They cover the main driving operations, including entering a
main road, overtaking, crossing intersections protected by traffic
lights or signals, and driving on freeways.  The results presented
reinforce the case for solutions that incorporate mathematically
elegant and robust decision methods that are safe by construction.

  \begin{keywords}
    autonomous driving systems, assume-guarantee techniques,
    compositionally safe construction
  \end{keywords}
\end{abstract}

\section{Introduction}\label{sec:introduction}
The development of trustworthy autonomous driving
systems (ADS) is today a major scientific and technical challenge that
could have a considerable economic and societal impact. It will also
be an important step towards building systems with human-level
intelligence. ADS need to combine extensive cognitive capabilities for
situation awareness and management of multiple objectives, while
meeting stringent requirements for safety and adaptability.

There are two main approaches to building ADS. The first uses
human-designed multi-module architectures, e.g., Autoware
\cite{Autoware-1.1}, Apollo \cite{Apollo-3.0}. In general, perception
and prediction modules make extensive use of neural networks, while
planning and control modules rely on traditional systems engineering
techniques. The other approach is to build end-to-end autopilots that
treat all driving as a single AI model, based on the vehicle's various
sensors and producing a trajectory from which steering angle and
acceleration/deceleration signals can be calculated
e.g., \cite{ChittaPJYRG23, ShaoW00022, HuCGMGYKCS22}. The advantages
and disadvantages of both approaches are well known. Multi-module
solutions are advantageous in terms of interpretability and
verifiability, but their intrinsic complexity can lead to a lack of
module composability, in particular compound errors and loss of
information. End-to-end AI-based solutions have the advantage of
simplicity and lower development costs, but they can be less
deterministic and suffer from the well-known problems of AI systems,
such as non-explainability and anomalies \cite{abs-2306-16927,
  KurakinGB18, StoreyL0P22}.  At present, none of these approaches is
satisfactory, as evidenced by the current debate on the deployment of
autonomous cars in urban environments, and real-world experience which
confirms that ADS still have a long way to go before offering the
necessary safety guarantees \cite{Cummings23}.

Our work adopts the multi-module approach, focusing on prediction,
planning and control. It assumes that perception problems are solved
by a separate module that generates a model of the external
environment in the form of an annotated map containing relevant
information about its state. This includes the geometric
characteristics of roads, signaling equipment and obstacles around the
vehicle, with their positions and kinematic attributes.

The formalization of maps and their underlying concepts has been the
object of numerous studies including proposals of standards such as
\cite{ASAMOpenDRIVE-1.6.0,ASAMOpenScenario-1.0.0}. Modeling maps as
extended graphs is a common idea adopted by many works, one of the
main problem being to build maps by composition of heterogeneous
elements and data e.g.,
\cite{PoggenhansPJONK18,ButzHHORSZ20,BeetzB18}. In \cite{BozgaS22}, we
study a multilevel semantic framework for the description of maps and
their properties, based on metric graphs. The framework allows maps to
be specified as a composition of building blocks such as different
types of roads and junctions. It also shows how the traffic rules for
each of these building blocks induce corresponding safety constraints
to be respected by autopilots.

The design of autopilots for autonomous vehicles has been the subject
of studies originating in robotics and control theory. Some studies
focus on architectures and their hierarchical structure, from the
fastest tasks at the lowest level to the slowest tasks at higher
levels, e.g. \cite{Staron21}.  In the field of architectures, the NIST
4D/RCS reference model \cite{Albus22} provides a basis for the design,
engineering and software integration of intelligent systems for
unmanned ground vehicles. The reference model integrates planners for
a set of tasks, each dedicated to a specific objective for predefined
operations. It implicitly assumes that the driving ability can be
summed up as the combination of skills required to perform a set of
elementary driving operations.  Our work is also based on such an
assumption, which allows us to decompose the autopilot behavior into a
set of specific control policies that can be designed and verified
separately. Validating this assumption poses two problems: the first
is to prove the correctness of the composition, i.e., that if the
specific control policies are safe so is the resulting autopilot
behavior. The other is completeness, i.e., that the set of control
policies considered is sufficient to drive safely in all
circumstances.

All the traditional control and game-theoretic approaches are relevant
to this study. These include PID (proportional-integral-derivative),
MPC (model-predictive control) and LPV (linear parameter-varying)
control approaches, whose application to ADS is however hampered by
problems of inherent complexity and limitations concerning their
ability to realistically take into account the safety and optimality
constraints of autonomous driving. A detailed analysis of these
approaches is presented in \cite{ChenLHXTLHTLWCZW23}. Of particular
interest are correct-by-construction techniques, where the autopilot
or some of its features are constructed from a set of properties
derived from system requirements. There is considerable work on
controller synthesis from a set of system properties, usually
expressed in linear temporal logic, see for example
\cite{Kress-GazitP08,SchwartingAR18,arxiv.2203.14110,WongpiromsarnKF11,WongpiromsarnTM12}. These
techniques have been extensively studied in the field of control. They
consist of restricting the controllable behavior of the system
interacting with its environment so that the desired properties are
satisfied. However, their practical value is limited by their high
computational cost, which depends in particular on the type of
properties and the complexity of the system's behavior.

An alternative to synthesis is to achieve correctness by construction
as a result of composing component properties expressed as
``assume-guarante'' contracts specifying a causal relationship between
components and their environment: if the environment satisfies the
``assume'' part of the contract, the state of the component will satisfy
the ``guarantee'' part, e.g.,
\cite{BenvenisteCNPRR18,ChatterjeeH07,Meyer92}.  A prerequisite for
contract-based design is the decomposition of global system
requirements into contracts, and the existence of suitable techniques
for their implementation \cite{abs-1909-02070}. There are a number of
theoretical frameworks that apply mainly to continuous or synchronous
systems, particularly for analysis and verification purposes
\cite{MavridouKGKPW21,SaoudGF21,abs-2012-12657}. They suffer from
computational limitations because, in the general case, they involve
the symbolic solution of fixed-point equations
\cite{MavridouKGKPW21}. Furthermore, they only apply to systems with a
static architecture, which excludes reconfigurable dynamic systems,
such as autonomous systems.

This paper builds on previous results \cite{BozgaS23} concerning a
correct coordination method for autonomous driving systems that allows
the construction of a Runtime coordinating a set of autonomous
vehicles based on their positions and kinematic attributes. The method
considers \emph{systems of} ADS (SADS) as dynamic systems involving
vehicles moving in a static environment modelled as maps
\cite{BozgaS22}. A key idea is that the vehicles are bounded to move
within their corresponding allocated free spaces computed by the
Runtime so as the whole system satisfies given properties including
collision avoidance and traffic rules.  The main result is that, given
a set of traffic rules, the Runtime can calculate authorizations for
the coordinated vehicles in such a way that the resulting system
behavior respects the traffic rules by construction. It is established
by showing that the composition of assume-guarantee contracts is an
inductive invariant that entails SADS safety.

This work adopts the same perspective of correct-by-construction, but
considers that there is no overall coordination of vehicles. Instead,
they drive autonomously, as in real life, with limited knowledge of
their physical environment and surrounding obstacles.  The approach
adopted is based on the assumption that the complexity of ADS
construction can be mastered by factoring it into three dimensions.

\textbf{Locality of context:} ADS operate in complex environments
taking on a wide variety of configurations, each of which can affect
the behavior of the system. Therefore, ADS safety is strongly
dependent on the context in which vehicles operate. The traffic
infrastructure can be seen as the composition of a finite number of
patterns comprising different types of roads and junctions with their
signaling equipment. We can therefore imagine that a vehicle's safety
policy is the composition of elementary policies, each of which is
used to drive safely according to the corresponding basic road
patterns.

\textbf{Locality of knowledge:} A vehicle’s driving policy is based
only on local knowledge of the SADS state due to limited
visibility. It must therefore drive safely, taking into account the
obstacles closest to it, delimited by a visibility zone. In this way,
the collective behavior of vehicles in a SADS can be understood and
analyzed as the composition of smaller sets of vehicles grouped
according to proximity and visibility criteria.

\textbf{Rights-based responsibility:} ADS are agents in a distributed
system where each agent is responsible for managing a space within its
planned route, defined by traffic rules. The rules ensure that if each
vehicle drives safely in the free space dynamically determined by its
rights, the whole system is safe. This principle of rights-based
responsibility \cite{abs-1708-06374,abs-2206-03418} greatly simplifies
the validation problem, since all that needs to be demonstrated is
that each vehicle drives safely in its own free space.

We show that the above decomposition principles reduce the general
problem to the construction of a vehicle autopilot capable of driving
responsibly in a limited number of contexts and configurations
involving a relatively small number of other vehicles and objects.

We call \emph{driving configuration} the input of an autopilot
generated by its perception function after analysis and interpretation
of the information provided by the sensors. Thus, a driving
configuration defines the state of the environment of a vehicle as
well as the applicable traffic rules. The state of the environment
includes the positions and kinematic attributes of other vehicles, as
well as information on signaling equipment and its status. A driving
configuration also delimits an area where the vehicle has to act
responsibly assuming that the mobile agents in this area behave in
accordance with traffic regulations.  A vehicle's driving
configuration depends on the visibility determined by various factors
in its environment, including topology and physical obstacles, as well
as weather and light conditions.  A simple analysis shows that, as a
vehicle moves, its autopilot reacts to inputs that are changes in the
state of its environment, characterized by three different types of
driving configurations:

\begin{enumerate}
\item \emph{road configurations} where there are no crossroads in the
  vehicle's area of visibility, and the autopilot is tasked with
  taking into account the obstacles in its route ahead;
\item \emph{merging configurations} when the vehicle's route joins a
  road or a lane where oncoming vehicles have a higher priority and it
  must therefore give way to these vehicles;
\item \emph{crossing configurations} where the vehicle's route crosses
  a junction accessible to other vehicles, and therefore, the vehicle
  must comply with the traffic rules applicable in this context.
\end{enumerate}

A key idea of this work is that to guarantee safety of a SADS, it is
sufficient to ensure safe driving of the vehicles involved for each
type of driving configurations.  The main result is obtained by
compositionality, in two main steps.

First, we provide safe driving policies for different contexts in each
type of driving configurations. Given a reference vehicle, called ego
vehicle, safe driving for a configuration requires a specific driving
operation to overcome potential conflicts while respecting the
applicable traffic rules. We argue that critical situations can be
characterized by configurations involving, in addition to the ego
vehicle, an oncoming vehicle whose route may intersect that of the ego
vehicle and a front vehicle located after the intersection on the
ego's route. The vehicle’s driving policy should take into account its
dynamic characteristics, in particular its braking and acceleration
capacity, as well as the relationships with surrounding obstacles, in
particular their speeds and their distances from locations where
collisions can occur.

Secondly, based on the assumption that over the course of a journey,
the changing perceived environment is a succession of driving
configurations, we show that if a vehicle can drive safely for each
type of configuration, the resulting behavior is safe under certain
simple temporal conditions. These guarantee the consistency of
transitions between the configurations making up the vehicle's route.

The paper is structured as follows.  Section~\ref{sec:approach}
presents the overall approach including definition of basic concepts,
the types of driving configurations and corresponding control policy
principles.  Section~\ref{sec:policies} provides control policies for
three driving configuration types and demonstrates their safety.
Section~\ref{sec:correctness} shows the correctness of the method by
proving compositionality of the policies for the different types of
driving configurations.  Section~\ref{sec:discussion} concludes by
summarizing the main results obtained and discussing future
developments with a view to their effective application.

\section{The Approach}\label{sec:approach}
\subsection{Environment modeling with maps} \label{sec:approach:maps}

Following the idea presented in \cite{BozgaS22}, we build contiguous
road segments from a set $\Segments$ equipped with a partial
concatenation operator $\cdot : \Segments \times \Segments{}
\rightarrow \Segments \cup \{\bot\}$, a length norm $\segnorm{.} :
\Segments \rightarrow \PosReals$ and a partial sub-segment extraction
operator $\subseg{.}{.}{.}:\Segments \times \PosReals \times \PosReals
\rightarrow \Segments \cup \{\bot\}$.  Given a segment $s$,
$\segnorm{s}$ represents its length and $\subseg{s}{a}{b}$ for $0 \le
a<b \le \segnorm{s}$, represents the sub-segment starting at length
$a$ from its origin and ending at length $b$.  Segments can be used to
represent roads at different levels of abstraction.  The highest level
may ignore the form for the segment and give only its length. The
lowest level can be a two-dimensional area. An intermediate level can
be a curve showing the form of the road and making abstraction of its
width.

We use metric graphs $G \eqdef (U,E,\Segments)$ to represent maps,
where $U$ is a finite set of \emph{vertices}, $\Segments$ is a set of
segments and $E \subseteq U \times \Segments^\star \times U$ is a
finite set of \emph{edges} labeled by \emph{non-zero length} segments
(denoted $\Segments^\star$).  For an edge $e=(u,s,u') \in E$ we denote
$\pre{e} \eqdef u$, $\post{e} \eqdef u'$, $\segment{e} \eqdef s$.
We call a metric graph \emph{connected} if a path exists between any
pair of vertices.

We consider the set $\Pos{G} \eqdef U \cup \{(e,a) \mid e \in E,~ 0
\le a \le \segnorm{\segment{e}}\}$ of the \emph{positions} defined by
a metric graph.  Note that positions $(e,0)$ and $(e,
\segnorm{\segment{e}})$ are considered equal respectively to positions
$\pre{e}$ and $\post{e}$.  We denote by $\edgeride{G}{p}{s}{p'}$ the
existence of an $s$-labelled \emph{edge route} between succeeding
positions $p=(e,a)$ and $p'=(e,a')$ in the same edge $e$ whenever $0
\le a < a' \le \segnorm{\segment{e}}$ and $s =
\subseg{\segment{e}}{a}{a'}$.  Moreover, we denote by
$\ride{G}{p}{s}{p'}$ the existence of an $s$-labelled \emph{route}
between arbitrary positions $p$, $p'$, that is, $\ride{G}{}{}{} \eqdef
(\edgeride{G}{}{}{})^+$ the transitive closure of edge routes.
Finally, we denote by $p'-p$ the \emph{distance} from position $p$ to
position $p'$ defined as 0 whenever $p = p'$ or as the minimum length
among all segments labeling routes from $p$ to $p'$ or as $+\infty$ if
no such route exists.  Whenever $G$ is fixed in the context, we will
omit the subscript $G$ for positions $\Pos{G}$ and routes
$\edgeride{G}{}{}{}$ or $\ride{G}{}{}{}$.

A connected metric graph $G=(U,E,\Segments)$ can be interpreted as a map,
structured into roads and junctions, subject to additional assumptions:
\begin{itemize}
\item We restrict to metric graphs which are 2D-consistent
  \cite{BozgaS22}, meaning intuitively they can be drawn in the
  2D-plane such that the geometric properties of the segments are
  compatible with the topological properties of the graph. In
  particular, if two distinct paths starting from the same vertex $u$,
  meet at another vertex $u'$, the coordinates of $u'$ calculated from
  each path are identical.  For the sake of simplicity, we further
  restrict to graphs where distinct vertices are located at distinct
  points in the plane, and moreover, where no edge is self-crossing
  (meaning actually that distinct positions $(e,a)$ of the same edge
  $e$ correspond to distinct points).
\item We consider that if the segments of two edges $e_1$ and $e_2$
  intersect at distances $a_1$ and $a_2$ from their starting points,
  then for the respective positions of the point of intersection we
  have $(e_1, a_1) = (e_2, a_2)$.

  We now define the \emph{junctions} of a map as the classes of an
  equivalence relation $\crossing$ on edges, obtained as the
  transitive closure of the relation generated by pairs of edges
  $(e_1,e_2)$ such that: either $e_1$ and $e_2$ intersect or the
  endpoint of $e_1$ and $e_2$ is the same vertex
  ($\post{e_1}=\post{e_2}$).  That is, a junction is a connected
  sub-graph such that for each edge $e_1$ there exists another edge
  $e_2$ with the same endpoint or intersecting $e_1$.  Note that if we
  remove the junctions of a map we obtain sets of roads, where a road
  is a sequence of non-intersecting edges whose initial vertices have
  in-degree 1.  We assume that junctions are equipped with additional
  signals to regulate traffic on the edges, e.g., traffic lights, stop
  signs, etc.
\end{itemize}

In the remainder of the paper, we consider a fixed metric graph
$G=(U,E,\Segments)$ altogether with the junction relationship
$\crossing$.  Also, we extend the junction relationships from edges to
their associated positions, that is, consider $(e_1,a_1) \crossing
(e_2,a_2) \eqdef e_1 \crossing e_2$ whenever $e_1 \crossing e_2$.
Finally, we denote by $r_1 \uplus r_2$ the property that routes $r_1$,
$r_2$ in $G$ are \emph{non-intersecting}, that is, their sets of positions
are disjoint and moreover not belonging to the same junction(s),
except for endpoints.

\subsection{SADS as dynamic systems}\label{sec:approach:ads}

A system of ADS (SADS) is a dynamic system involving a set of \emph{vehicles}
$\Vehicles$, a set of regulatory \emph{signals} $\Objects$, and a map
$G$ that represents the environment where the signals are located and
the vehicles can move. We use the term \emph{obstacle} to refer to a
vehicle or a signal.

The state $q$ of a SADS is the union of the states of its
vehicles and signals, $q \eqdef q_\Vehicles \cup q_\Objects$, where
$q_\Vehicles \eqdef \{q_c\}_{c \in \Vehicles}$ and $q_\Objects \eqdef
\{q_h\}_{h \in \Objects}$ where:
\begin{itemize}
\item The state of a vehicle $c$, is a tuple $q_c \eqdef \langle
  \po{c}, s_c, \speed{c}, V_{c}, \ldots \rangle$, where $\po{c}$ is
  the position of $c$ on the map, $s_c$ is a segment labeling a route
  on the map starting at $\po{c}$, $\speed{c}$ is the speed of $c$,
  $V_c$ is its speed limit enforced at position $\po{c}$.
\item The state of a signal $h$, is a tuple $q_h \eqdef \langle
  \po{h}, \mathit{type}_h, \ldots \rangle$, where $\po{h}$ is its
  position on the map and $\mathit{type}_h$ denotes its type.  We
  consider signals of the following types:
  \begin{itemize}
  \item stop or yield signs guarding junctions, with a critical
    distance attribute $cd_h$ defining the length of the route segment
    they are protecting,
  \item traffic lights guarding junctions, with a critical distance
    attribute $cd_h$, a color attribute $\mathit{color}_h$ taking
    values \textit{red}, \textit{yellow} or \textit{green}, and
    \emph{time-to-color} attributes $\timeToRed{h}$,
    $\timeToGreen{h}$; note that $\timeToRed{h}=0$ if
    $\mathit{color}_h=red$ and otherwise $\timeToRed{h}>0$ is
    keeping the strictly positive duration until $\mathit{color}_h$
    became $red$ (same for $\timeToGreen{h}$),
  \item speed limits, with an attribute $V_h$ defining the enforced
    speed limit.
  \end{itemize}
\end{itemize}
A SADS evolves from an initial state $q[0] \eqdef q_\Vehicles[0] \cup
q_\Objects[0]$ and through states $q[t] \eqdef q_\Vehicles [t] \cup
q_\Objects[t]$, with $q[t] \stackrel{\Delta t}{\rightarrow} q[t +
  \Delta t]$ where $\Delta t$ is an adequately chosen time step. The
latter can be the period of the autopilots.

In the context of a SADS state, we introduce some additional notations
on the underlying metric graphs $G$.  For a vehicle $c$ and
non-negative value $d$ we denote by $\po{c} +_c d$ the unique position
$p$ located ahead on the route of $c$ at distance $d$, that is,
formally satisfying $\ride{}{\po{c}}{s_c[0, d]}{p}$.  Moreover, for
any such two positions $p_1 = \po{c} +_c d_1$, $p_2 = \po{c} +_c d_2$
we write $p_1 \sim_c p_2'$ iff $d_1 \sim d_2$ for any $\sim \in
\{<,\le,\ge,>\}$.

\subsubsection{Visibility zone}

For a reference vehicle called \emph{ego} vehicle, we define the
concept of \emph{driving configuration} characterized by its state and
the states of the
obstacles in its visibility zone. The visibility zone is defined by an
area of the map around the ego vehicle using two types of parameters
(Fig.~\ref{fig:vista}):
\begin{itemize}
\item The \emph{frontal visibility} of the ego vehicle on its route up
  to a front distance $fd(q_e)$ is delimiting the interval in which
  the ego’s autopilot can perceive the obstacles on its route. The
  distance $fd(q_e)$ depends on factors such as road curvature,
  obstacles in view and weather conditions at position $\po{e}$.
\item The \emph{lateral visibility} of the ego vehicle when its route
  meets a road or lane $r$, is the distance from the junction point at
  which the ego vehicle can perceive vehicles arriving from $r$. We
  denote by $ld_r(q_e)$ this distance.
\end{itemize}

These parameters determine the visibility zone by points on the route of
the ego vehicle and also on the possibly intersecting roads from which
vehicles can arrive.

  \begin{figure}[htbp]
    \centering
    \input{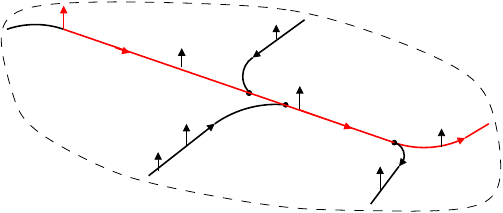_t}
    \caption{\label{fig:vista}{Driving configuration of an ego vehicle with position
  $\po{e}$ and route $s_e$ and two arriving vehicles}.}
  \end{figure}

\subsubsection{Driving configuration}

Given a SADS, a driving configuration for an ego vehicle with state $q_e$ is a triple
$\cfg \eqdef \langle q_e, q_F, q_A \rangle$ where
\begin{itemize}
\item $q_e$ is the state of the ego vehicle.
\item $q_F$ is the ordered set of the states of front obstacles $F
  \eqdef \{f_1,\ldots, f_n\} \subseteq \Vehicles \cup \Objects$
  located on the route of the ego vehicle in its visibility zone such
  that $\po{e} \le_e \po{f_1} <_e \cdots <_e \po{f_n} = \po{e} +_e
  fd(q_e)$.  We consider that the last visible obstacle $f_n$ is a
  fictitious vehicle at the front visibility limit.
\item $q_A$ is the set of the states of arriving vehicles $A \subseteq
  \Vehicles$, at most one per road joining the route of the ego
  vehicle in its frontal visibility zone.  If there is no real vehicle
  arriving on the road $r$ within the corresponding visibility
  distance $ld_r(q_e)$, then we consider a fictitious vehicle $a_r$
  with state $q_{a_r} = \langle \po{a_r}, s_{a_r}, \speed{a_r},
  V_{a_r} \rangle$ defined such that (i) $\po{a_r}$ is the position on
  the segment representing the road $r$ at distance $ld_r(q_e)$ from
  the junction point, (ii) $s_{a_r}$ is the segment between $\po{a_r}$
  and the junction point and (iii) $\speed{a_r} = V_{a_r} =$ the speed
  limit enforced at $p_{a_r}$ on this route.  Thus, for vehicles $a_r$
  arriving from a road $r$, the position $p_r$ of the junction point
  is such that:
  $$\exists d_e \le fd(q_e).~\exists d_r \le ld_r(q_e).~
  \po{e} +_e d_e = \po{a_r} +_{a_r} d_r = p_r.
  $$
\end{itemize}

Note that while the driving configuration $\cfg$ includes the states
of all visible obstacles in the ego vehicle's route, it only includes
the state of a single arriving vehicle joining or crossing its route
within the frontal visibility limit.  The precautionary principle
requires us to consider fictitious vehicles at the limits of the
visibility zone. These fictitious vehicles can be a frontal obstacle
at distance $fd(q_e)$, or a vehicle arriving from a road or lane
joining its route at distance $ld_r(q_e)$. In this way, visibility
constraints are implicitly taken into account in a configuration.

\subsubsection{SADS as dynamic systems}

A SADS with $m$ vehicles, is a dynamic system using a map $G$ and
having states $q[t]$ that can change after time $\Delta t$ to
$q[t+\Delta t]$.  At the top of Fig.~\ref{fig:ads} we show a
decomposition of the SADS as a dynamic system with autopilots one for
each vehicle $c$, which knowing the global system state $q[t]$ at
time $t$ and the map $G$, compute their new state $q_c[t+ \Delta
  t]$ at time $t+ \Delta t$.


  \begin{figure}[htbp]
    \centering
    \input{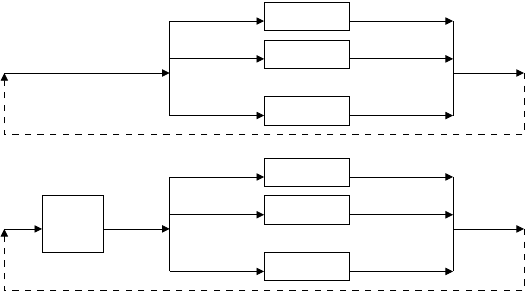_t}
    \caption{\label{fig:ads}{SADS as a dynamic system composed of the
    environment and the autopilots}.}
  \end{figure}

At the bottom of Fig.~\ref{fig:ads}, this architecture is further
refined with the autopilots and a component representing the
environment of the vehicles. The environment component receives the
states of the vehicles at the end of the cycle and computes the global
state of the SADS using the map and the knowledge of the state of its
signals. Furthermore, it produces for the $c$ vehicle autopilot the
corresponding driving configurations $\cfg_c(q[t])$ from the global state $q[t]$ taking
into account visibility parameters.


%
%
%

\subsection{The three basic driving configurations types}\label{sec:approach:vista-types}

As explained in the Introduction, we consider that the autopilot of
the ego vehicle receives basic types of configurations as input, each
requiring specific operations implemented by the corresponding control
policies (Fig.~\ref{fig:vista-types}):

\mytextcircled{1} \textbf{Road configurations} are of the form $\cfg \eqdef
\langle q_e, q_F, \emptyset \rangle$ where there is no junction in the
frontal visibility area of the ego vehicle. Note that road configurations can
be simplified in the following manner.  Consider the ordered set of
the front obstacles $F$ and remind that the last visible obstacle is a
fictitious vehicle at the front visibility limit.  Clearly, these
obstacles are either vehicles or signals (e.g., speed limits).  The
ego vehicle is responsible for driving safely in the space on its
route until to the closest visible vehicle, hence all the obstacles
after this vehicle can be omitted as irrelevant.

  \begin{figure}[htbp]
    \centering
    \input{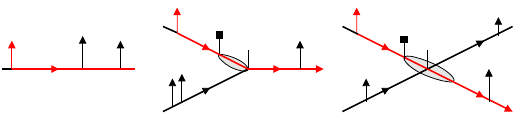_t}
    \caption{\label{fig:vista-types}{Road, merging and crossing configurations - critical sections in gray}.}
  \end{figure}

\mytextcircled{2} \textbf{Merging configurations} describe situations where
the route of the ego vehicle merges into a main road. They are of the
form $\cfg \eqdef \langle q_e, q_F, q_a \rangle$ such that there exists
a merging position $p$ satisfying
$$\exists d_e,d_a,d.~ \po{e} +_e d_e = \po{a} +_a d_a = p, 
s_e[d_e,d_e+d] = s_a[d_a,d_a+d]$$
and moreover $\po{e} \le_e \po{f_1} \le_e p <_e \po{f_2}$ where $f_1$
is a yield or stop sign.  That is, all the elements of $F$ other than
$f_1$ are located after the merging point $p$.  A merging configuration can be
simplified by replacing $q_F$ by $q_{F'}$ where $ \langle q_e, q_{F'},
\emptyset \rangle$ is the simplified road configuration for $\langle q_e,
q_F,\emptyset \rangle$.  To avoid collision, we assume that the signal
$f_1$ is at critical distance $\cd{f_1}$ on the route of the ego
vehicle before the merging point.

Note that the ego vehicle deals with merging configurations when its route
merges into a higher-priority road or when it changes lanes. In
particular, overtaking involves two successive merging operations: one
consists of moving from the initial lane to an adjacent lane, the
other of returning to the initial lane after a phase of driving in a
straight line to ensure that it is far enough away from the overtaken
vehicle.
  
\mytextcircled{3} \textbf{Crossing configurations} describe situations where
the route of the ego vehicle crosses a main road. They are of the form
$\cfg \eqdef \langle q_e, q_F, q_A \rangle$ such that there exists a
crossing position $p$ satisfying
$$\exists d_e, d_a.~\po{e}+_e d_e = \po{a} +_a d_a = p$$
for all arriving vehicles $a \in A$. We assume that $f_1$
is a signal such as a traffic light or a stop sign located before
$p$ and protecting a critical distance $\cd{f_1}$, that is, the length
of $s_e$ in the intersection.  A simplified crossing configuration $\scfg \eqdef
\langle q_e, q_{F'}, q_A \rangle$ is such that $\langle q_e, q_{F'},
\emptyset \rangle$ is a simplified road configuration.  We can define sub-types
of crossing configurations. One is when $f_1$ is a traffic light, another when
$f_1$ is a yield or a stop sign.

Based on this decomposition into types of configurations, the autopilot can be
architectured as the serial composition of two components
(Fig.~\ref{fig:arch-autopilot}):
\begin{itemize}
\item a \emph{Configuration Manager} component that receives configurations and
  produces simplified configurations with their type and relevant parameters
  such as the protecting signals for crossing configurations and the
  positions where routes may intersect;
\item a \emph{Control Policy Manager} that receives the simplified
  configuration and their type, applies the corresponding control policy and
  produces the next state of the ego vehicle, sent to the environment
  component responsible for composing the next configuration for each
  vehicle. The Control Policy Manager also generates commands to the
  drive-by-wire platform to modify the speed by $\Delta v$ and the
  steering angle by $\Delta \phi$, in distance $\Delta d$. Note that
  the proposed controllers only calculate the speed variation
  $\Delta v$ and distance $\Delta d$. We assume that the corresponding
  steering angle $\Delta \phi$ can be estimated from the curvature of
  the road segment at the current position and the distance
  $\Delta d$.
\end{itemize}

  \begin{figure}[htbp]
    \centering
    \input{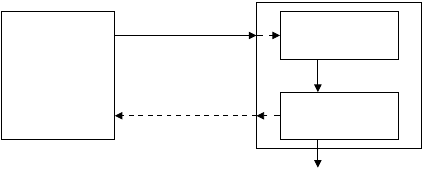_t}
    \caption{\label{fig:arch-autopilot}{Architecture of the autopilot}.}
  \end{figure}

\subsection{Characteristic parameters of driving configurations}\label{sec:approach:vista-parameters}

A driving configuration for the ego vehicle describes a situation involving a
potential obstacle on its route and calling for operation implemented
by a specific control policy. In its simplest and general form, a
driving configuration involves (Fig.~\ref{fig:vista-parameters}):
\begin{itemize}
\item the ego vehicle at position $\po{e}$, with its route $s_e$ and
  speed $\speed{e}$
\item an arriving vehicle $a$ at speed $\speed{a}$, located at
  position $\po{a}$ on its route $s_a$, which encounters the route
  $s_e$ at position $p$
\item a signal $h$ at position $\po{h}$ on the route $s_e$, located
  before and protecting the obstacle $p$ with critical distance $\cd{h}$
  i.e., $\po{h} <_e p \le_e \po{h} +_e \cd{h}$
\item a front vehicle $f$ at position $\po{f}$ on the route $s_e$
  located after the obstacle at a distance $d_f=\po{f}-p$.
\end{itemize}

  \begin{figure}[htbp]
    \centering
    \input{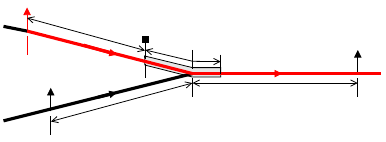_t}
    \caption{\label{fig:vista-parameters}{A driving configuration and its
  characteristic parameters}.}
  \end{figure}

It is easy to check that the three different types of driving configurations match
this model.  For road configurations the obstacle is just a speed limit
signal.  Merging configurations cover two cases. The first is when the
obstacle can be a yield sign and the road of the ego vehicle merges
into a road of higher priority where the arriving vehicle is
traveling. The second case is where the ego vehicle moves from one
lane to another in which the arriving vehicle is traveling. In both
cases, we assume that there is a front vehicle after the merging
point.  Crossing configurations cover also two cases. The first is when the
obstacle is the traffic lights protecting an intersection, in which
case arriving vehicles are irrelevant. The second case is an
intersection where the crossing ego vehicle faces a yield or stop
sign.

\section{Control Policies for Configuration Types}\label{sec:policies}

\subsection{Assumptions about the vehicle's dynamics}
\label{sec:policies:dynamics}

As explained, an autopilot for a given driving configuration, issues commands that aim
at controlling the speed of the vehicle based on its knowledge of the
dynamics of the vehicle as an electromechanical system.  This
knowledge is important for predicting vehicle behavior and its ability
to execute given commands aimed at modifying its kinematic state. To
avoid detailed modeling of the vehicle as a dynamic system, see for
example \cite{Wenfei21}, we assume that we know for each vehicle the
next two functions that are sufficient to decide feasibility of control:

\begin{enumerate}
\item The \emph{braking function} $\Brake(v)$ that gives the distance
  needed to brake from speed $v$ to speed 0.  We assume the braking
  function is strict and monotonic, that is, $B(0)=0$ and $B(v_1) >
  B(v_2)$ for any $v_1 > v_2$.



\item The \emph{speed control function} $\DeltaV(v, \mathit{VL})$
  which gives the pair $(\Delta v, \Delta d)$ of the greatest speed
  variation $\Delta v$ and the associated distance traveled $\Delta d$
  after $\Delta t$, if any achievable from speed $v$ such that $0 \le
  v+\Delta v$ and $v + \Delta v, \Delta d$ compliant to speed limit
  constraints $\mathit{VL}$ as explained below.
\end{enumerate}
It is clear that the speed limit constraints impose changes in speed
which must be compatible with vehicle's dynamics characterized by the
two functions $\Brake$ and $\DeltaV$.  We assume that the speed limit
constraints $\mathit{VL}$ encountered by a vehicle on its route are
specified as sequences of pairs $\langle(d_i,V_i)\rangle_{i=0,n}$ of
distance $d_i$ and speed limit $V_i$ (see Fig.~\ref{fig:dv}), meaning
that the speed limit $V_i$ is enforced in the interval $[d_i,d_{i+1})$
for all $i\in[0,n]$ (and where implicitly $0 = d_0 < d_1 < \ldots < d_n <
d_{n+1} \eqdef +\infty$).  To comply with speed limit constraints, the
above-mentioned functions must meet the following requirements:
\begin{itemize}
\item First, the $\DeltaV$ function is only defined for input
  arguments $v$ and $\mathit{VL}\eqdef\langle(d_i,V_i)\rangle_{i=0,n}$ that
  satisfy the \emph{controllability} condition $\Brake(v) \le
  d_i + \Brake(V_i)$, that is, the vehicle is able to slow down from speed $v$ to
  $V_i$ in distance less than $d_i$, for all $i \in [0,n]$.
\item Second, in this case, the output pair $(\Delta v, \Delta d)
  \eqdef \DeltaV(v, \mathit{VL})$ shall satisfy, for all $i\in [0,n]$:
  \begin{itemize}
  \item $v + \Delta v \le V_i$ whenever
    $d_i \le \Delta d < d_{i+1}$,
  \item $\Delta d + \Brake(v + \Delta v) \le d_i + \Brake(V_i)$
    whenever $\Delta d \le d_i$,
  \item $\Delta d \le d_i$ whenever $V_i = 0$.
  \end{itemize}
  That is, the new speed $v + \Delta v$ satisfies the
  \emph{controllability} condition with respect to speed limits still
  ahead after traveling the distance $\Delta d$ and moreover, $\Delta
  d$ never exceeds the distance to a speed 0 limit.
\end{itemize}
These joint requirements of $\Brake$ and $\DeltaV$ functions allow us
later to establish safety invariance properties relating the speed of
the vehicle, its braking capacity and its distance to front
obstacles (speed limit signals, moving vehicles, etc).  Their
effective use to define speed control policies for specific
driving situations will be explained in the next sections.

As an example, we show how these functions can be defined in a simple
case. Assume for the sake of simplicity that the autopilot can only
select, during every $\Delta t$ period, among different constant
acceleration/deceleration values in the interval $[-b_{max},
  a_{max}]\subseteq \Reals$, that is, between a maximal deceleration
rate $-b_{max}$ and a maximal acceleration rate $a_{max}$.  In this
case the braking function $\Brake_{ex}$ and the speed control function
$\DeltaV_{ex}$ could be defined as follows:
\begin{multline*}
  \Brake_{ex}(v) \eqdef \mbox{if } v < b_{max} \Delta t \mbox{ then }  v \Delta t / 2 \\
  \mbox{ else } v \Delta t - b_{max} \Delta t^2 / 2 + \Brake_{ex}(v - b_{max} \Delta t) \\
\DeltaV_{ex}(v, \langle(d_i,V_i)\rangle_{i=0,n}) \eqdef (a^* \Delta t, v \Delta t + a^* \Delta t^2/ 2) \mbox{ where}  \\
a^*  \eqdef \min\nolimits_{i=0,n}  \max \{ a \in [-b_{max},a_{max}]~\mid \\
v + a\Delta t \ge 0,~ \Delta d = v \Delta t + a \Delta t^2 / 2, \\
d_i  >0  \mbox{ and } \Delta d + \Brake_{ex}(v + a \Delta t) \le d_i + \Brake_{ex}(V_i) \mbox{ or } \\
d_i  =0  \mbox{ and } v + a \Delta t \le V_i \}
\end{multline*}

That is, the function $\Brake_{ex}$ defines the distance
needed to decrease from speed $v$ to speed 0 when only constant
decelerations are used for braking during fixed periods $\Delta t$.
The function $\DeltaV_{ex}$ defines the speed variation as the
maximal variation satisfying \emph{all} the speed limit constraints.

It is an easy check that, if the inputs given to the function
$\DeltaV_{ex}$ satisfy the controllability condition $\Brake_{ex}(v)
\le d_i + \Brake_{ex}(V_i)$ then, the sets of
acceleration/deceleration values over which the maximum is computed
are not empty, for every speed limit constraint $(d_i,V_i)$.  That is,
the set of values $a$ compliant with the $i$th constraint contains, at
least:
\begin{itemize}
\item $a_{max}$ whenever $d_i > 0$ and $v \Delta t + a_{max} \Delta t^2/2 + \Brake_{ex}(v + a_{max} \Delta t)
   \le d_i + \Brake_{ex}(V_i)$,
\item $0$ whenever $d_i > 0$ and $v \Delta t + \Brake_{ex}(v) \le d_i + \Brake(V_i)$ or $d_i = 0$ or $v \le V_i$,
\item $-b_{max}$ whenever $b_{max} \Delta t \le v$, 
\item $(V_i-v)/\Delta t$ whenever $0 <  v - V_i < b_{max} \Delta t$.
\end{itemize}

For example, Fig.~\ref{fig:dv} represents the speed limit
constraints $\mathit{VL}=$ $\langle(0 m$, $100 km/h)$, $(40 m, 50 km/h)$,
$(140 m, 0 km/h)\rangle$ with three thick horizontal lines, respectively
$100$ $km/h$ during $40$ meters (interval $[0,40)$), followed by $50
km/h$ for $100$ meters (interval $[40,140)$), and then stop at $0
km/h$ at $140$ meters (interval $[140,-)$).  For each speed limit,
below the solid line of the same color is the corresponding
controllability region, assuming $b_{max} = -3.4 m/s^2$ and $\Delta t
= 1s$. Moreover, the area below the dotted line of the same color is
the part of the corresponding controllability region from which the
vehicle is controllable after applying the maximal acceleration
$a_{max} = 2.5 m/s^2$ for $\Delta t$.

The figure shows also the possible speed variations and the associated
distance for respectively initial speeds of $30$, $60$ and $90 km/h$.
For initial speed of $30$ and $60 km/h$ the vehicle is
controllable with respect to the three speed limits, whereas for
initial speed $90 km/h$ it is not possible to respect the speed limit
of $50 km/h$. The figure shows in green for $30$ and $60 km/h$,
possible choices of speed variations and the corresponding traveled
distance during $\Delta t$. Note that, whereas every acceleration
preserves controllability for initial speed $30 km/h$, the maximal
acceleration $a_{max}=2.5 m/s^2$ for $60 km/h$ results in a speed
exceeding the controllability limit.

\begin{figure}[t]
  \resizebox{\columnwidth}{!}{
\begin{tikzpicture}

  \tikzmath{\a=2.5; \b=3.4; \xt=1;
    \vl1=27.79; \vl2=13.89; \vl3=0.0;
    \d1=0.0; \d2=40.0; \d3=140.0; \d4=160.0;
    \v0=8.3;}
  \begin{axis}[grid,
      xlabel={$d ~~ (m)$}, ylabel={$v ~~ (km/h)$},
      width=\textwidth,  height=0.35\textheight,
      enlarge x limits=false,
      legend style={}, legend pos=north east
    ]
    
    \addplot[brown, domain=\d1:\d2, smooth, style={ultra thick}] {3.6 * \vl1};
    \addplot[blue, domain=\d2:\d3, smooth, style={ultra thick}] {3.6 * \vl2};
    \addplot[magenta, domain=\d3:\d4, smooth, style={ultra thick}] {3.6 * \vl3};

    \addplot[blue, domain=\vl2:22.0, smooth] ( {\d2 - (x * x - \vl2 * \vl2) / (2 * \b)}, {3.6 * x} );
    \addplot[magenta, domain=\vl3:29.0, smooth] ( {\d3 - (x * x - \vl3 * \vl3) / (2 * \b)}, {3.6 * x} );
    

    \addplot[brown, domain=\d1:\d2, dashed] {3.6 * (\vl1 - \a * \xt) };
    \addplot[blue, domain=\vl2-\a*\xt:16.8, dashed] 
    ( {\d2 - ((x + \a * \xt) * (x + \a * \xt) - \vl2 * \vl2) / (2 * \b) - x * \xt - (\a * \xt * \xt) / 2}, {3.6 * x} );
    \addplot[blue, domain=\d2-13:\d3, dashed] {3.6 * (\vl2 - \a * \xt) };    
    \addplot[magenta, domain=\vl3:25.5, dashed]
    ( {\d3 - ((x + \a * \xt) * (x + \a * \xt) - \vl3 * \vl3) / (2 * \b) - x * \xt - (\a * \xt * \xt) / 2}, {3.6 * x} );

    \addplot[green, domain=0:\xt, style={ultra thick}] ( { \v0 * x + \a * x * x / 2}, {3.6 * ( \v0 + \a * x) });
    \addplot[green, domain=0:\xt, style={ultra thick}] ( { \v0 * x + \a * x * x / 4}, {3.6 * ( \v0 + \a * x / 2) });
    \addplot[green, domain=0:\xt, style={ultra thick}] ( { \v0 * x}, {3.6 * ( \v0 ) });
    \addplot[green, domain=0:\xt, style={ultra thick}] ( { \v0 * x - \b * x * x / 4}, {3.6 * ( \v0 - \b * x / 2) });
    \addplot[green, domain=0:\xt, style={ultra thick}] ( { \v0 * x - \b * x * x / 2}, {3.6 * ( \v0 - \b * x) });
 
    \addplot[red, domain=0:\xt, style={ultra thick}] ( { 2 * \v0 * x + \a * x * x / 2}, {3.6 * ( 2 * \v0 + \a * x) });
    \addplot[green, domain=0:\xt, style={ultra thick}] ( { 2 * \v0 * x + \a * x * x / 4}, {3.6 * ( 2 * \v0 + \a * x / 2) });
    \addplot[green, domain=0:\xt, style={ultra thick}] ( { 2 * \v0 * x}, {3.6 * ( 2 * \v0 ) });
    \addplot[green, domain=0:\xt, style={ultra thick}] ( { 2 * \v0 * x - \b * x * x / 4}, {3.6 * ( 2 * \v0 - \b * x / 2) });
    \addplot[green, domain=0:\xt, style={ultra thick}] ( { 2 * \v0 * x - \b * x * x / 2}, {3.6 * ( 2 * \v0 - \b * x) });

    \addplot[red, domain=0:\xt, style={ultra thick}] ( { 3 * \v0 * x + \a * x * x / 2}, {3.6 * ( 3 * \v0 + \a * x) });
    \addplot[red, domain=0:\xt, style={ultra thick}] ( { 3 * \v0 * x + \a * x * x / 4}, {3.6 * ( 3 * \v0 + \a * x / 2) });
    \addplot[red, domain=0:\xt, style={ultra thick}] ( { 3 * \v0 * x}, {3.6 * ( 3 * \v0 ) });
    \addplot[red, domain=0:\xt, style={ultra thick}] ( { 3 * \v0 * x - \b * x * x / 4}, {3.6 * ( 3 * \v0 - \b * x / 2) });
    \addplot[red, domain=0:\xt, style={ultra thick}] ( { 3 * \v0 * x - \b * x * x / 2}, {3.6 * ( 3 * \v0 - \b * x) });
    
    \legend{ {100km/h at [0,40)}, {50km/h at [40,140)},  {0km/h at [140,-)}  }
  \end{axis}

\end{tikzpicture}
}
\caption{\label{fig:dv}Speed control illustration according to speed limits $\mathit{VL}$}
\end{figure}

\subsection{The $\mathsf{follow}$ policy and travel time prediction}
\label{sec:policies:follow}

Using the braking $\Brake$ and speed control $\DeltaV$ functions, we
now define the primitive $\mathsf{follow}$ policy and the
related travel time prediction function $\mathsf{tt}$.  These
primitives are used next to define the more sophisticated control
policies for types of driving configurations.

The $\follow{\cfg,p}$ policy takes as input a position $p$ located in
the current configuration $\cfg$ of the ego vehicle ahead on its route. This
policy controls the speed of the ego vehicle so that it follows safely
the position $p$ i.e., by keeping the distance $p-\po{e}$ as small as
possible while remaining safe.

First, the policy observes the speed limits enforced on the ego
vehicle route in its configuration $\cfg$ till the position $p$.  To do so, it
relies on the function $\mathsf{speed\mbox{-}limits}(\cfg,p)$ defined
next to compute the speed limit constraints $\mathit{VL}$ in the
appropriate format defined in subsection~\ref{sec:policies:dynamics}.
\begin{multline*}
  \mathsf{speed\mbox{-}limits} (\langle q_e,q_F,q_A\rangle, p) \eqdef \\ 
  \mbox{\textbf{let} } \langle h_1, \ldots, h_k \rangle \leftarrow
  \langle h \in F ~|~ type_h=\textit{speed limit}, \po{h} <_e p \rangle, \\
  \mbox{\textbf{return} } \langle (0,V_e),(\po{h_1}-\po{e},V_{h_1}), \ldots, 
    (\po{h_k}-\po{e},V_{h_k}),(p-\po{e},0) \rangle 
\end{multline*}
That is, the current speed limit $V_e$ is applied at the current
position, the next speed limit $V_{h_1}$ is applied from its position
$\po{h_1}$ (if any), etc, and the speed limit 0 is applied at position $p$.

Second, the $\follow{\cfg,p}$ policy uses the speed control function
$\DeltaV$ given the current speed $\speed{e}$ and the speed limit
constraints $\mathit{VL}$ on the route to compute the commands
$\Delta v$ and $\Delta d$ for the driving platform (as explained in
subsection~\ref{sec:approach:vista-types} and illustrated in
Fig.~\ref{fig:arch-autopilot}):
\begin{multline*} \follow{\cfg, p} \eqdef \\
  \mbox{\textbf{let} } \mathit{VL} \leftarrow
  \mathsf{speed\mbox{-}limits}(\cfg, p), \mbox{\textbf{drive} } \DeltaV(v_e, \mathit{VL})
 \end{multline*}
Recalling that the speed control function $\DeltaV$ is a partial
function, note that $\follow{\cfg,p}$ executes correctly only if
$\DeltaV$ is called with parameters satisfying the controllability
property.  To this end, the next proposition establishes the
pre/postconditions for the correct execution of $\follow{\cfg,p}$ in
terms of $\cfg$ and $p$ according to the assumptions about the $\Brake$
and $\DeltaV$ functions.

\begin{proposition}\label{prop:speed-controllability}
  Let $\langle h_1, ..., h_k \rangle$ be the ordered set of speed
  limit signs from the front obstacles $F$ in $\cfg$ before $p$.
  The policy $\follow{\cfg,p}$ executes correctly only if
  \begin{enumerate}
  \item\label{it1:prop:speed-controllability}
    $\speed{e} \le V_e$, that is, current speed is lower than the
    current speed limit,
  \item\label{it2:prop:speed-controllability}
    $\po{e} +_e \Brake(\speed{e}) \le_e \po{h_i} +_e \Brake(V_{h_i})$, that is, speed
    limit preservation with respect to any speed limit $h_i$ located
    such that $\po{e} <_e \po{h_i} <_e p$,
  \item\label{it3:prop:speed-controllability}
    $\po{e} +_e \Brake(\speed{e}) \le_e p$, that is, safety
    distance preservation to the position $p$.
  \end{enumerate}
  In this case, the speed variation $\Delta v$ and the
  distance traveled $\Delta d$ satisfy, for any speed limit signal $h_i$ as above (where taking implicitly
  $\po{h_{k+1}} \eqdef p$):
  \begin{enumerate}\setcounter{enumi}{3}
  \item\label{it5:prop:speed-controllability}
    $\speed{e} + \Delta v \le V_e$ if $\po{e} +_e \Delta d <_e \po{h_1}$,
  \item\label{it6:prop:speed-controllability}
    $\speed{e} + \Delta v \le V_{h_i}$ if $\po{h_i} \le_e \po{e} +_e \Delta d <_e \po{h_{i+1}}$,
  \item\label{it7:prop:speed-controllability}
    $\po{e} +_e \Delta d +_e \Brake(\speed{e} + \Delta v) \le_e
    \po{h_i} +_e \Brake(V_{h_i})$ if $\po{e} +_e \Delta d \le_e \po{h_i}$,
  \item\label{it8:prop:speed-controllability}
    $\po{e} +_e \Delta d +_e \Brake(\speed{e} + \Delta v) \le_e p$.
  \end{enumerate}
\end{proposition}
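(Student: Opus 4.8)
The plan is to prove the statement by a direct translation: unfold the three definitions involved — $\mathsf{speed\mbox{-}limits}$, the domain (controllability) condition of $\DeltaV$, and the output guarantees of $\DeltaV$ — and rewrite each resulting inequality on distances as the corresponding inequality on positions via the identification $\po{e}+_e d$. First I would instantiate $\mathsf{speed\mbox{-}limits}(vs,p)$ to obtain the explicit constraint sequence $\mathit{VL}=\langle(0,V_e),(\po{h_1}-\po{e},V_{h_1}),\ldots,(\po{h_k}-\po{e},V_{h_k}),(p-\po{e},0)\rangle$, pinning down the generic format $\langle(d_i,V_i)\rangle_{i=0,n}$ with $n=k+1$, $d_0=0$, $V_0=V_e$, then $d_i=\po{h_i}-\po{e}$, $V_i=V_{h_i}$ for $1\le i\le k$, and finally $d_{k+1}=p-\po{e}$, $V_{k+1}=0$.

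For the necessity part (items 1--3), I would use that $\follow{vs,p}$ executes correctly exactly when $\DeltaV(v_e,\mathit{VL})$ is defined, i.e. when the controllability condition $\Brake(v_e)\le d_i+\Brake(V_i)$ holds for every $i\in[0,k+1]$. Reading this at the three kinds of index gives the three items: at $i=0$ it is $\Brake(v_e)\le\Brake(V_e)$, equivalent by strict monotonicity of $\Brake$ to $v_e\le V_e$ (item 1); at $1\le i\le k$ it is $\Brake(v_e)\le(\po{h_i}-\po{e})+\Brake(V_{h_i})$, which is exactly $\po{e}+_e\Brake(v_e)\le_e\po{h_i}+_e\Brake(V_{h_i})$ once both sides are read as distances from $\po{e}$ (item 2); and at $i=k+1$, using $\Brake(0)=0$, it is $\Brake(v_e)\le p-\po{e}$, i.e. $\po{e}+_e\Brake(v_e)\le_e p$ (item 3).

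For the output part (items 4--7) I would feed the same index instantiation into the three output guarantees of $\DeltaV$. The first guarantee, $v_e+\Delta v\le V_i$ whenever $d_i\le\Delta d<d_{i+1}$, yields item 4 at $i=0$ (where $d_0=0\le\Delta d$ holds automatically and the window $\Delta d<d_1$ reads $\po{e}+_e\Delta d<_e\po{h_1}$) and item 5 at $1\le i\le k$ (with $\po{h_{k+1}}\eqdef p$). The second guarantee, $\Delta d+\Brake(v_e+\Delta v)\le d_i+\Brake(V_i)$ whenever $\Delta d\le d_i$, gives item 6 at $1\le i\le k$ after the same position rewriting. Item 7 is the one place where two guarantees combine: at the terminal index $i=k+1$ the third guarantee ($\Delta d\le d_i$ whenever $V_i=0$) forces $\Delta d\le p-\po{e}$, which makes the second guarantee applicable there and, using $\Brake(0)=0$, delivers $\Delta d+\Brake(v_e+\Delta v)\le p-\po{e}$, i.e. item 7.

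The argument is essentially bookkeeping, so the only genuine content — and the step I would treat most carefully — is the position/distance dictionary: checking that $\po{e}+_e d_1\le_e\po{e}+_e d_2$ is the same as $d_1\le d_2$ and that $\po{e}+_e\big((\po{h_i}-\po{e})+\Brake(V_{h_i})\big)=\po{h_i}+_e\Brake(V_{h_i})$, so that each distance inequality transfers verbatim to positions. I would also be careful with the terminal speed-limit $0$ at $p$, since item 7 relies on $\Brake(0)=0$ and on the interplay of the second and third output guarantees rather than on a single one; strict monotonicity of $\Brake$ is likewise what collapses the $i=0$ controllability inequality to the clean speed bound $v_e\le V_e$ of item 1.
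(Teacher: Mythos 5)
Your proof is correct and follows essentially the same route as the paper's: instantiate $\mathsf{speed\mbox{-}limits}(vs,p)$ to the explicit sequence $\langle(0,V_e),(\po{h_i}-\po{e},V_{h_i})_{i\in[1,k]},(p-\po{e},0)\rangle$, read items 1--3 off the controllability precondition of $\DeltaV$ index by index, and read items 4--7 off its output guarantees, translating distances to positions throughout. In fact you are slightly more careful than the paper at two points it leaves implicit --- invoking strict monotonicity of $\Brake$ to collapse $\Brake(\speed{e})\le\Brake(V_e)$ to $\speed{e}\le V_e$, and observing that the terminal bound (item \ref{it8:prop:speed-controllability}) needs the third output guarantee ($\Delta d\le p-\po{e}$ since $V_{k+1}=0$) to make the second guarantee applicable at the last index --- so no gap.
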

\begin{proof}
    The $\follow{\cfg,p}$ policy calls the speed control function
    $\DeltaV$ with the current speed $\speed{e}$ and the speed limit
    constraints $\mathit{VL} \eqdef \langle (0, V_e), \langle
    (\po{h_i} - \po{e}, V_{h_i}) \rangle_{i\in[1,k]}, (p - \po{e},
    0)\rangle$ to obtain the speed variation $\Delta v$.  Then, the
    conditions
    \ref{it1:prop:speed-controllability}--\ref{it3:prop:speed-controllability}
    above are simply restating the controllability conditions, that
    is, the preconditions guaranteeing a valid return value $(\Delta
    v, \Delta d) \eqdef \DeltaV(\speed{e}, \mathit{VL})$, relative to
    the current speed and the position of the ego vehicles with
    respect to front obstacles in the configuration, respectively:
    \begin{itemize}
    \item $\Brake(\speed{e}) \le 0 + \Brake(V_e)$ (condition \ref{it1:prop:speed-controllability}),
    \item $\Brake(\speed{e}) \le \po{h_i} - \po{e} +
      \Brake(V_{h_i})$ for all $i \in [1,k]$ (condition \ref{it2:prop:speed-controllability}),
    \item $\Brake(\speed{e}) \le p - \po{e} + \Brake(0)$ (condition \ref{it3:prop:speed-controllability})
    \end{itemize}
    In a similar manner, the conditions \ref{it5:prop:speed-controllability}--\ref{it8:prop:speed-controllability} are restating the
    postcondition of the successful call, respectively:
    \begin{itemize}
    \item $\speed{e} + \Delta v \le V_e$ if $0 \le \Delta d < \po{h_1} - \po{e}$ (condition \ref{it5:prop:speed-controllability}),
    \item $\speed{e} + \Delta v \le V_{h_i}$ if $\po{h_i} - \po{e} \le
      \Delta d < \po{h_{i+1}} - \po{e}$ for some $i\in[1,k]$ (condition \ref{it6:prop:speed-controllability}),
    \item $\Delta d + \Brake(\speed{e} + \Delta v) \le \po{h_i} -
      \po{e} + \Brake(V_{h_i})$ for all $i\in [1,k]$ such that
      $\Delta d \le \po{h_i} - \po{e}$ (condition \ref{it7:prop:speed-controllability}),
    \item $\Delta d + \Brake(\speed{e} + \Delta v) \le p - \po{e} +
      \Brake(0)$ (condition \ref{it8:prop:speed-controllability})
      because the last limit constraint enforces the speed 0 at
      distance $p - \po{e}$.
    \end{itemize}
\end{proof}


Finally, consider a typical driving configuration $\cfg$ involving a front vehicle $f$
at position $\po{f}$ and a signal $h$ located at position $\po{h}$,
protecting a critical area situated between the ego vehicle and $f$
(as in Fig.~\ref{fig:vista-parameters}).  Usually, the decision for
the ego vehicle to cross or not the position $\po{h}$ between $\po{e}$
and $\po{f}$ depends on the time needed to go from $\po{e}$ to
$\po{h}+_e \cd{h}$ when safely following $f$.  This travel time can be predicted
if we assume the ego vehicle will drive continuously according to the
$\follow{\cfg,p}$ policy at least until it reaches the position
$\po{h}+_e \cd{h}$.



That is, as $\follow{\cfg,p}$ is a deterministic policy, we can
effectively predict the time to travel from $\po{e}$ to any position
$p'$ such that $\po{e} <_e p' \le_e p$ by using the function
$\timeTo{\cfg,p',p}$ defined below:
\[ \begin{array}{l}
 \timeTo{\cfg, p',p} \eqdef v \leftarrow v_e,~ \mathit{VL} \leftarrow \mathsf{speed\mbox{-}limits}(\cfg,p) \\
  ~~~~ x \leftarrow p' - \po{e}, ~ t \leftarrow 0 \\
  ~~~~ \mbox{\textbf{while }} x > 0 \mbox{\textbf{ do}} \\
  ~~~~ ~~~~ (\Delta v, \Delta d) \leftarrow \DeltaV(v, \mathit{VL}), \\
  ~~~~ ~~~~ v \leftarrow v + \Delta v, ~ x \leftarrow x - \Delta d,~ t \leftarrow t + \Delta t, \\
  ~~~~ ~~~~ \mathit{VL} \leftarrow \langle (\max(d_i-\Delta d, 0),V_i) ~|~ \\
  ~~~~ ~~~~ \hspace{2cm} (d_i,V_i)\in \mathit{VL},~ \Delta d \le d_{i+1} \rangle_{} \\
  ~~~~ \mbox{\textbf{return} }t
\end{array} \]
This function ``simulates'' driving according to $\follow{\cfg,p}$ and
sums up the $\Delta t$ steps until the ego vehicle travels the
distance $x = p' - \po{e}$.  Note that, in this computation we assume
that the leading position $p$ is not changing.  In typical situations,
$p$ may correspond to the position $\po{f}$ of a front vehicle $f$
moving forward.  In any case, however, consequent predictions for the
traveling time to $p'$ can either remain unchanged or decrease (but
never increase) over time.

\subsection{Control policies for driving configuration types}
\label{sec:policies:policies}
Each driving configuration requires a specific operation when the ego vehicle
approaches the obstacle. The aim of the operation is to clear the
obstacle safely, respecting traffic regulations and, of course,
avoiding accidents with the arriving vehicle and the front vehicle.
The operation corresponding to a configuration is logically characterized by
scenarios comprising two successive phases:

\begin{enumerate}
\item A \emph{caution phase} during which the ego vehicle approaches
  the obstacle, reducing its speed if necessary, and waiting for
  conditions to be favorable to clear the obstacle, e.g., approaching
  a merge, approaching a crossing or remaining in the same lane before
  overtaking.
\item A \emph{progress phase} during which the ego vehicle clears the
  obstacle after checking that there is no risk of collision with the
  arriving vehicle or the vehicle in front, e.g., to overtake a
  vehicle, enter a main road, cross an intersection. The progress
  phase therefore, consists of moving as quickly as possible to avoid
  collision with the arriving vehicle, while retaining the possibility
  of avoiding a collision with the front vehicle.
\end{enumerate}

An important issue in autopilot construction is the interplay between
cautious behavior and progress. An extremely cautious autopilot may be
safe but if it neglects the opportunity to progress it can degrade
performance resulting in bad road occupancy and possibly
deadlocks. Here are a few examples of over-cautious control policies:
driving at low speed on a freeway; stopping before a yield sign even
if the priority road is clear; not overtaking a slow vehicle in front,
such as a truck, when the ego vehicle's performance allows it and the
outside lane is clear; stopping before a green light.

In our method, we make a clear distinction between cautious driving
and progress by providing a clearance condition that enables the
transition between the two phases.  We define control policies as
processes run by the autopilot to control the speed of the vehicle by
generating every $\Delta t$ a corresponding speed change $\Delta v$.
They are defined on the basis of the iterative application of the
primitive $\mathsf{follow}$ policy.  Control policies are written as
simple programs $\mathsf{pol}_T$, one for every type of configuration $T$,
with guarded commands of the form
\begin{multline*}\mathsf{pol}_T(\cfg) \eqdef [\initkw \mapsto ...  ] \\
\dokw~ g_1 \mapsto
\mathsf{pol}_1(\cfg,...) \mid \cdots \mid g_k \mapsto
\mathsf{pol}_k(\cfg,...) ~\odkw
\end{multline*}
Their execution should be understood in relation to the architecture
of the autopilot (see Fig.~\ref{fig:arch-autopilot}) proposed in
subsection~\ref{sec:approach:vista-types}.  Every $\Delta t$, as long
as the input configurations $\cfg$ sent to the autopilot have the same type $T$,
the corresponding policy $\mathsf{pol}_T$ executes the command whose
guard is true.  A guard is a state predicate depending on the states
of the vehicles and signals of the input configuration.  Whenever the type of
the input configuration $\cfg$ changes to some $T' \not= T$ at some step $\Delta
t$, the next policy $\mathsf{pol}_{T'}$ is first (re)initialized and
then takes over with the execution of the commands as usual.

\subsection{Policy for road configurations}
\label{sec:policies:road}

The $\mathsf{road}$ policy for a simplified road configuration $\cfg=\langle
q_e,q_F,\emptyset \rangle$ is the direct application of
$\follow{\cfg,\po{f}}$, where $f$ is the last (leading) front obstacle
in $F$ i.e., either a real or fictitious vehicle before the frontal
visibility limit of the ego vehicle in $\cfg$.

\begin{align*}
  \mathsf{road}(\cfg) \eqdef  \dokw~~ \mathit{true} \mapsto \follow{\cfg, \po{f}} ~~\odkw
\end{align*}

\subsection{Policies for merging configurations}
\label{sec:policies:merging}

\subsubsection{Merging with yield sign}

The \textsf{merge-yield} policy applies for a simplified merging configuration
$\cfg = \langle q_e,q_F,q_a \rangle$ as illustrated in
Fig.~\ref{fig:merging-yield}.

In the caution phase, the ego vehicle approaches according to
$\follow{\cfg,\po{h}}$ where $h$ is the yield sign protecting the
merging point.  That is, the ego vehicle ensures $\Brake(\speed{e})
\le d_e$ and can stop if needed at critical distance $\cd{h}$ before the
merging point.  In the progress
phase, the ego vehicle moves towards the merging point according to
$\follow{\cfg,\po{f}}$ where $f$ is the leading front vehicle in $F$.

  \begin{figure}[htbp]
    \centering
    \input{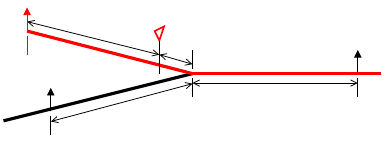_t}
    \caption{\label{fig:merging-yield}{Merging configuration with yield sign}.}
  \end{figure}

The clearance condition allowing to switch from the caution to the
progress phase is defined as follows.  For the arriving vehicle $a$,
we assume that it can drive at the maximal speed limit $V_a$ on its
road and it is at distance $d_a$ from the merging point. So, if the
ego vehicle decides to progress and so switch to the
$\follow{\cfg,\po{f}}$ policy, the time to reach the merging point would
be at most $\timeTo{\cfg,\po{h}+_e\cd{h},\po{f}}$.  Within this time the
arriving vehicle will have traveled distance $V_a \cdot
\timeTo{\cfg,\po{h}+_e\cd{h},\po{f}}$. So the remaining distance from the
merging point will be $d_a - V_a \cdot \timeTo{\cfg,
  \po{h}+_e\cd{h},\po{f}}$. This distance should be large enough for a
safe brake, therefore, $ \Brake(V_a) \le d_a - V_a \cdot \timeTo{\cfg,
  \po{h}+_e\cd{h},\po{f}}$.

The \textsf{merge-yield} policy is therefore specified as follows.
Note that the switching between the two phases is controlled by the
boolean clearance flag $\clf$ initially set to $\mathit{false}$ and
updated continuously during the caution phase.
\begin{align*}
  \mathsf{merge\mbox{-}}&\mathsf{yield}(\cfg) \eqdef \\
  \initkw & \mapsto  ~\clf \leftarrow \mathit{false}, ~h \leftarrow \mathsf{yield\mbox{-}sign}(F) \\
   \dokw~~  \neg \clf & \mapsto  ~\follow{\cfg,\po{h}}, ~ \\
   & \clf \leftarrow ( V_a \cdot \timeTo{\cfg, \po{h}+_e\cd{h},\po{f}} + \Brake(V_a) \le d_a) \\
  \mid ~~~~ \clf & \mapsto  ~\follow{\cfg,\po{f}}  ~~\odkw
\end{align*}

\subsubsection{Lane change}

The lane change configuration $\cfg = \langle q_e, q_F, q_A\rangle$ is
illustrated in Fig.~\ref{fig:merging-lane-change}.  In the caution
phase, the ego vehicle follows the front vehicle $f_1$ on its lane.
The ego vehicle can initiate the progress phase of the lane change
whenever the following conditions hold, respectively:
\begin{enumerate}
\item the arriving vehicle $a_2$ can stop safely behind ego's
  position, $\Brake(V_{a_2}) \le d_{a_2}$,
\item the ego vehicle can stop safely behind the front vehicle $f_2$
  on the other lane, that is, $\Brake(\speed{e}) \le d_{f_2}$,
\item neither the front vehicle $f_1$ nor the back vehicle $a_1$ are
  currently performing a lane change.
\end{enumerate}

  \begin{figure}[htbp]
    \centering
    \input{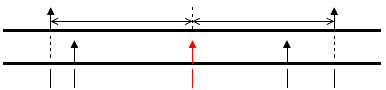_t}
    \caption{\label{fig:merging-lane-change}{Lane change configuration}.}
  \end{figure}

These conditions provide the necessary guarantees for the ego vehicle
to (begin to) drive safely in the transition between the two lanes,
as required for the lane change.  To check condition (3) we assume
that vehicles signal lane changes, for example by using flashing
lights.
 
In the progress phase, the ego vehicle follows the front vehicle $f_2$
while also performing effectively the lane change. This policy is
essentially the same as $\follow{\cfg, \po{f_2}}$ but also includes
moving safely sideways into the corresponding lane as soon as possible
and activating the flashing lights.  When the lateral movement is
completed, the policy falls back to $\follow{\cfg,\po{f_2}}$. The lane
change policy is formalized as follows.
\begin{align*}
  \mathsf{lane\mbox{-}}&\mathsf{change}(\cfg) \eqdef 
  \initkw \mapsto ~\clf \leftarrow \mathit{false} \\
  \dokw ~~ \neg\clf & \mapsto ~\follow{\cfg,\po{f_1}}, ~ \\
  & \clf \leftarrow (\Brake(V_{a_2}) \le d_{a_2} \mbox{ and } \Brake(\speed{e}) \le d_{f_2}) \\
  \mid ~~~~ \clf & \mapsto ~\follow{\cfg, \po{f_2}}  ~~\odkw
\end{align*}

\subsection{Policies for crossing configuration}
\label{sec:policies:crossing}

\subsubsection{Crossing with yield sign}

The crossing configuration with yield sign is illustrated in
Fig.~\ref{fig:crossing-yield}.  It includes the ego vehicle
traveling at speed $\speed{e}$ at a distance $d_e$ from the point of
intersection with a main road protected by a yield sign $h$. In this
operation, the ego vehicle should approach cautiously moderating its
speed until it decides to progress if there is no risk of collision
with some arriving vehicle at distance $d_a$ and with allowed maximal
speed $V_a$. Additionally, it should avoid collision with front
vehicles at distance $d_f$ after the intersection. We suppose that
around the intersection point there is a critical area delimited by a
critical distance $\cd{h}$ such that the presence of two vehicles in this
area is considered a potential accident.

  \begin{figure}[htbp]
    \centering
    \input{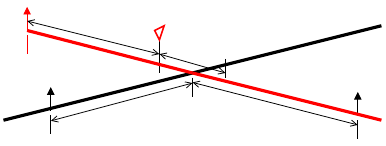_t}
    \caption{\label{fig:crossing-yield}{Crossing configuration with yield sign}.}
  \end{figure}

For the caution phase, the ego vehicle is driving according to
$\follow{\cfg,\po{h}}$, that is, guaranteeing $\Brake(\speed{e}) \le d_e$.  In the
progress phase, the ego vehicle is driving according to
$\follow{\cfg,\po{f}}$.  Hence, the time needed by the ego vehicle to cross
and get out of the critical section is $\timeTo{\cfg, \po{h}+_e\cd{h},\po{f}}$. In this
time, the arriving vehicle will have traveled maximal distance $V_a
\cdot \timeTo{\cfg, \po{h}+_e\cd{h},\po{f}}$. So, the remaining space should be enough
to brake if needed to avoid collision in the critical area. This gives
the clearance condition $V_a \cdot \timeTo{\cfg, \po{h}+_e\cd{h},\po{f}} + \Brake(V_a)
\le d_a$ for switching between the two phases.  The crossing with
yield policy is therefore formalized as follows:
\begin{align*}
  \mathsf{cross\mbox{-}}&\mathsf{yield}(\cfg) \eqdef \\
  \initkw & \mapsto ~\clf \leftarrow \mathit{false}, h \leftarrow \mathsf{yield\mbox{-}sign}(F) \\
  \dokw~~  \neg ~\clf & \mapsto ~\follow{\cfg,\po{h}}, ~ \\
  & \clf \leftarrow (V_a \cdot \timeTo{\cfg, \po{h} +_e \cd{h},\po{f}} + \Brake(V_a) \le d_a) \\
  \mid ~~~~ \clf & \mapsto ~\follow{\cfg,\po{f}} ~~ \odkw
\end{align*}

\subsubsection{Traffic lights}

The crossing configuration $\cfg = \langle q_e, q_F, q_A \rangle$ illustrated in
Fig.~\ref{fig:crossing-traffic-light} involves the ego vehicle
approaching a traffic-light protected intersection.  It
also involves a signal $h$ of type traffic light with a state
variable $\mathit{color}_h$ taking values \textit{red},
\textit{yellow} and \textit{green}.

We assume that we know the duration $T_y$ of the yellow
light. Furthermore, we assume that the traffic lights of the
intersection have an ``\emph{all red}'' phase of known duration
$T_{ar}$ where all the lights are red before some light passes from
red to green. These constants are very important for respecting safety
regulations requiring that when the ego vehicle enters the critical
section the lights should be either green or yellow. In addition, a
vehicle entering the intersection must exit before a light turns green
on a transverse road.

  \begin{figure}[htbp]
    \centering
    \input{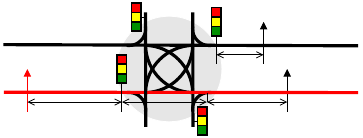_t}
    \caption{\label{fig:crossing-traffic-light}{Crossing
  configuration with traffic lights}.}
  \end{figure}

In the caution phase, the ego vehicle must maintain its speed so
that it can stop before the traffic light, i.e.  $\Brake(\speed{e})
\le d_e$.  The clearance condition for the progress phase is twofold:
\begin{itemize}
\item First, the ego vehicle should not see red light which means even if
the lights switch to yellow right after the decision to cross is
taken, the ego vehicle will reach the entrance of the intersection
before the lights turn to red. That is $\timeTo{\cfg, \po{h},\po{f}}
\le T_y$.

\item Second, the time needed to cross the critical section, that is to reach
the position $\po{h}+_e\cd{h}$ after the junction must be less that
$T_y+T_{ar}$, that is, $\timeTo{\cfg, \po{h} +_e \cd{h},\po{f}} \le
T_y+T_{ar}$.

\item Third, the junction is clear of vehicles, that is, no arriving
  vehicle $a$ is in the junction except if located at its respective entry $h_a$
  and not moving $v_a = 0$.
\end{itemize}

The crossing with traffic lights policy is therefore
formalized as follows.
\begin{align*}
  \mathsf{cross\mbox{-}}&\mathsf{traffic\mbox{-}light}(\cfg) \eqdef \\
  \initkw & \mapsto \clf \leftarrow \mathit{false}, h \leftarrow \mathsf{traffic\mbox{-}light}(F)\\
  \dokw ~~ \neg \clf & \mapsto ~\follow{\cfg,\po{h}},~ \\
  & \clf \leftarrow \big(\mathit{color}_{h} = \mathit{green} \mbox{ and} 
  \timeTo{\cfg, \po{h},\po{f}} \le T_y \mbox{ and } \\
  & ~~ \timeTo{\cfg, \po{h} +_e \cd{h},\po{f}} \le T_y + T_{ar} \mbox{ and} \\
  & ~~ \bigwedge\nolimits_{a \in A} (p_a = p_h \Rightarrow (p_a = p_{h_a} \wedge v_a = 0) \big) \\
  \mid ~~~~ \clf & \mapsto ~\follow{\cfg,\po{f}} ~~ \odkw
\end{align*}

\subsubsection{Crossing with all-way stop}
The crossing configuration $\cfg = \langle q_e, q_F, q_A \rangle$ involves the
ego vehicle approaching an intersection protected by a stop sign $o$
as illustrated in Fig.~\ref{fig:crossing-all-way-stop}.  We assume the
intersection is \emph{all-way stop}, that is, all its entries are actually
protected by stop signs.

  \begin{figure}[htbp]
    \centering
    \input{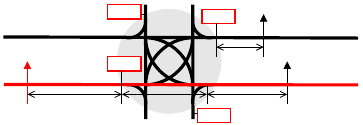_t}
    \caption{\label{fig:crossing-all-way-stop}{Crossing configuration with all-way stop}.}
  \end{figure}

The \textsf{cross-stop} policy includes three phases.  In the first
caution phase, the ego vehicle approaches the junction such that to
stop at position of the stop sign $\po{h}$.  Then, in the second
caution phase the ego vehicle waits at the stop position until (i) the
intersection is clear of vehicles and (ii) ego is the earliest arrived
among the waiting vehicles.  Then, in the third progress phase, the ego
vehicle crosses the intersection by following its leading front vehicle $f$
from $F$.

In order to specify and implement the condition (ii) above, we assume
that $q_e$ includes a time variable $st_e$ (\textit{stop time}) that
records the exact time when the vehicle stops.  Also, we remind that the $\crossing$
relation holds for positions belonging to the same junction, as
introduced in subsection~\ref{sec:approach:maps}.

The \textsf{cross-stop} policy is therefore formalized as follows:
\begin{align*}
  \mathsf{cross\mbox{-}stop}(\cfg) \eqdef ~\initkw \mapsto & ~\clf \leftarrow \mathit{false}, h \leftarrow \mathsf{stop\mbox{-}sign}(F) \\
  \dokw~~ \po{e} <_e \po{h} \mapsto & ~\follow{\cfg,\po{h}} \\
  \mid~~ \po{e} = \po{h} \wedge \neg \clf \mapsto & ~\follow{\cfg,\po{h}}, \\
  \clf \leftarrow \bigwedge\nolimits_{a \in A} & \big(\po{e} \crossing \po{a} \implies (\po{a} = \po{h_a} \wedge \\
  & \speed{a} = 0 \wedge st_e < st_a)\big) \\
  \mid \hspace{2cm} \clf \mapsto & ~\follow{\cfg,\po{f}} ~~\odkw
\end{align*}

Note that we tacitly exclude the possibility to observe the same stop
time for two vehicles and hence to mutually block the progress over
the junction.  This can be eventually achieved by using an additional
arbitration protocol e.g., a total priority order between junction
entries.

\section{Correctness of the Method}\label{sec:correctness}

We will use a compositional approach for guaranteeing safety,
regardless the number of vehicles and the map characteristics.  For
doing that, we first define a concept of \emph{free space} for
vehicles depending on their current driving configuration and control
policy.  Then, we show that the vehicles can adapt their speed
depending on their free space, that is, are always driving safely
while remaining within their free space.  Moreover, we show that the
proposed configuration management and control policies guarantee the free
spaces stay disjoint throughout the execution, and thus ensure the
absence of collision between vehicles at any time.

The proof of correctness is, however, subject to few additional mild
assumptions, listed here for the sake of completeness:
\begin{enumerate}
\item \label{as:simple-maps} The merging points, crossing points and
  the junctions of the map are neatly separated by road segments.
  Thus, the frontal visibility of every vehicle contains at most one
  critical position i.e., a merging, a crossing or a junction
  entry-point.  Also, no such critical positions exist on the arriving
  roads, with the exception of the entry points belonging to the same
  junction, if any.  In particular, this assumption implies that the
  transition from one policy to another is always to/from the
  \textsf{road} policy.  Thus, the behavior of the Control Policy
  Manager can be modeled as a mode automaton where each mode
  corresponds to the application of one particular type of policy, as
  illustrated in Fig.~\ref{fig:mode-policies}.
\item \label{as:visibility-non-retracting} The frontal and lateral
  visibility do not retract. That is, $p_e +_e fd(q_e) \le_e p_e'
  +_e fd(q_e')$ and $ld_r(q_e) \le_e ld_r(q_e')$ for every vehicle
  $e$, at any consecutive positions and states $p_e,p_e',q_e,q_e'$
  respectively, for every arriving road $r$ in its configuration.  Moreover, a
  vehicle located at some entry point of a junction has full
  visibility over all the positions of that junction.
\item \label{as:speed-limits-consistent} Speed limit signals are neatly
  separated and moreover, the limits never increase when approaching junctions
  and/or any critical road segments.  This assumption is
  mandatory for the conservative evaluation of the clearance
  conditions.
\item \label{as:reactive} The time period $\Delta t$ is small enough to
  guarantee $v \cdot \Delta t \le \Brake(v)$, for all $v \ge V_0 > 0$
  where $V_0$ is the minimal speed limit used on roads. For
  example, if $V_0=5km/h$ and $\Brake(v) \sim v^2/(-2b_{max})$ for
  $b_{max} = -3.4m/s^2$ this assumption reduces to $\Delta t \le
  200ms$.  The assumption is mandatory for guaranteeing the vehicles
  remain in their free spaces at every cycle, as defined next.
\end{enumerate}

  \begin{figure}[htbp]
    \centering
    \input{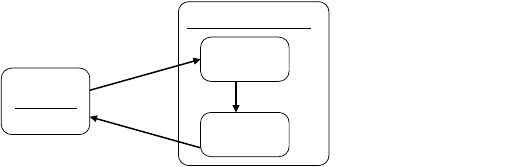_t}
    \caption{\label{fig:mode-policies}{Refined behavior of the Control Policy Manager}.}
  \end{figure}

We remind that the behavior of a SADS is characterized by all the
possible execution sequences $\langle q[t] \rangle_{t = n\Delta t}$
consisting of successive states reached every $\Delta t$.
We assume that vehicles operate according to their specific configuration
policies, as introduced previously.  That is, at every step $\Delta
t$, every vehicle $e$ applies the corresponding caution or progress
policy step corresponding to its configuration $\cfg$.

\subsection{Safe configurations and safe states}

Let $q \eqdef q[t]$ be a SADS state and $e$ be a vehicle.  Let
respectively $\cfg \eqdef \cfg_e(q) \eqdef \langle q_e, q_F, q_A \rangle$
be the current configuration (of some type $T$) and $\mathsf{pol}_T(\cfg)$ be
the control policy applied by $e$ in $\cfg$.  Consider
$q_e \eqdef \langle \po{e}, s_e, \speed{e}, V_e, ... \rangle$.  Then,
we define
\begin{itemize}
\item the \emph{front lead obstacle} $\fo{e}$ as the obstacle from
$F$ whose position is followed by $e$ in the current execution step of
$\mathsf{pol}_T(\cfg)$, that is, either
\begin{itemize}
\item the signal $h$ protecting the merging or crossing (e.g., yield sign,
  traffic light, stop sign, etc) during the caution phase, or
\item the last (closest) front vehicle $f$ during the progress phase of
  $\mathsf{pol}_T(\cfg)$
\end{itemize}
\item the \emph{limit position} 
\( \limit{e} \eqdef \min\nolimits_e
(\po{\fo{e}}, \po{e} +_e \Brake(V_e), (\po{h} +_e \Brake(V_h))_{h \in
F_{sl}})\) where $F_{sl} \eqdef \{h \in F ~|~ \att{h}{type}=\textsf{speed-limit},~\po{h} <_e \po{\fo{e}} \}$
\item the \emph{free space} of $e$ as the route 
$fs_e \eqdef \ride{}{\po{e}}{s_e[0,\ell]}{\limit{e}}$ where
$\ell \eqdef \limit{e} - \po{e}$.
\end{itemize}
In other words, the limit of free space is determined by the strongest
constraint resulting either from the position of the front obstacle,
or from the distances required to comply with speed limits.  Let us
observe that $\po{e} \le_e \limit{e} \le_e \po{\fo{e}} \le_e \po{e}
+_e fd(q_e)$, that is, the free space never exceeds the position of
the front lead obstacle, and consequently of the frontal visibility
limit of $e$ (because the obstacle $\fo{e}$ is one of the front
obstacles from $F$, visible in the configuration $\cfg$, see
Fig.~\ref{fig:free-space}).

  \begin{figure}[htbp]
    \centering
    \input{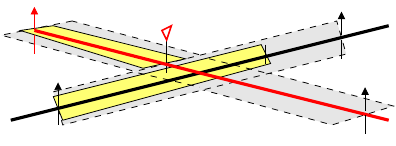_t}
    \caption{\label{fig:free-space}{Examples of configurations (gray) and free spaces (yellow)}.}
  \end{figure}

We call a driving configuration $\cfg$ of type $T$ \emph{safe} if $I_1(\cfg) \wedge
I_2(\cfg)$ hold where
\begin{align*}
I_1(\cfg) \eqdef & (\po{e} +_e \Brake(\speed{e}) \le_e \limit{e}) \\
I_2(\cfg) \eqdef & \left\{ \begin{array}{ll}
  \mathsf{true} & \mbox{if } T=\mathsf{road} \\
  (\fo{e} = h \wedge \po{e} \le_e \po{h}) \bigvee \\
  ~~ (\po{h} +_e \cd{h} <_e \po{\fo{e}} \wedge \mathcal{C}_T) & 
  \mbox{otherwise} \\
\end{array} \right.
\end{align*}
and $\mathcal{C}_T$ is provided in Table~\ref{tab:vista-conditions}, for different types $T$.

$I_1(\cfg)$ ensures the speed of the $e$ vehicle is adapted to its free
space in its configuration $\cfg$.  $I_2(\cfg)$ ensures the lead obstacle
$\fo{e}$ is consistently defined with respect to the execution phase
of the policy $\mathsf{pol}_T(\cfg)$ for merging or crossing configuration $\cfg$.
The two terms of the disjunction correspond to the caution and the
progress phases respectively.  For the caution phase, the safety
condition ensures that the free space is limited by the protecting
signal $h$ and hence not intersecting the critical section of the
junction or merging. In the progress phase, however, the free space
may expand over the critical section.  In this case, the condition
$\mathcal{C}_T$ provides the additional guarantees for having
non-intersecting free spaces.

\begin{table}[htbp]
\caption{\label{tab:vista-conditions} Specific safety conditions for merging and crossing configurations}
\begin{tabular}{|p{.26\columnwidth}| p{.64\columnwidth} | } \hline
~~$T$ & ~~$\mathcal{C}_T$ \\ \hline
\textsf{merge/cross-yield} &
$V_a \cdot \timeTo{\cfg,\po{h} +_e \cd{h},\po{\fo{e}}} + \Brake(V_a) \le d_a$ \\[5pt]
\textsf{cross-traffic-light} & $\po{e} \le_e \po{h} \implies (\timeTo{\cfg,\po{h},\po{\fo{e}}} < \timeToRed{h}) \wedge$ \\
& $\bigwedge_{a \in A} \timeTo{\cfg, \po{h}+_e \cd{h}, \po{\fo{e}}} < \timeToGreen{h_a} \wedge$ \\
& $\bigwedge_{a \in A} \po{a} \crossing \po{h} \implies (\po{a} = \po{h_a} \wedge \speed{a} = 0)$ \\[5pt]
\textsf{cross-stop} & $\bigwedge_{a \in A} \po{a} \crossing \po{h} \implies 
(\po{a} = \po{h_a} \wedge \speed{a} = 0 \wedge$ \\
& $\att{e}{st} < \att{a}{st})$ \\ \hline
\end{tabular}
\end{table}

We call a state $q$ \emph{safe} if all the driving configurations are safe, for
all vehicles $e$.
We call a state $q$ \emph{compliant to speed limits} if
$\speed{e} \le V_e$ for all vehicles $e$.

\begin{lemma}\label{lemma:safe-is-speed-compliant-and-disjoint}
A safe state is compliant to speed limits and ensures disjoint free
spaces for all vehicles.
\end{lemma}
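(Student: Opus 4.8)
The plan is to prove the two assertions separately. Speed-limit compliance is an immediate consequence of $I_1$, while disjointness requires a case analysis organized by $I_2$ and the clearance conditions $\mathcal{C}_T$ of \autoref{tab:vista-conditions}, together with the map assumptions.

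For \textbf{speed-limit compliance}, fix a vehicle $e$ with safe vista $vs$. Since $\po{e}+_e\Brake(V_e)$ is one of the terms of the minimum defining $\limit{e}$, we have $\limit{e}\le_e\po{e}+_e\Brake(V_e)$. Combining this with $I_1(vs)$, namely $\po{e}+_e\Brake(\speed{e})\le_e\limit{e}$, yields $\po{e}+_e\Brake(\speed{e})\le_e\po{e}+_e\Brake(V_e)$, i.e. $\Brake(\speed{e})\le\Brake(V_e)$ as distances on $s_e$. As $\Brake$ is strict and monotonic, hence injective and increasing, this forces $\speed{e}\le V_e$. Since $e$ was arbitrary, the state is compliant to speed limits.

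For \textbf{disjointness}, I would take two distinct vehicles $e,e'$ and establish $fs_e\uplus fs_{e'}$ by classifying how their route segments can meet, using the observation recorded before the lemma that every free space stops before its front lead obstacle, $\limit{e}\le_e\po{\fo{e}}$. If the two segments share neither a position nor a junction, disjointness is immediate by definition of $\uplus$. If both vehicles travel the same road with, say, $e'$ ahead of $e$, then the front lead obstacle of $e$ lies at or before $e'$ (the closest visible vehicle, a speed-limit signal, or the visibility limit, each preceding $e'$), so $\limit{e}\le_e\po{e'}$ and the free spaces meet at most at the endpoint $\po{e'}$. The remaining, and principal, case is when the routes meet at a merging or crossing junction; here I would show that at most one of the two free spaces enters the junction. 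By $I_2$, a vehicle still in its caution phase has $\fo{e}=h$ and $\po{e}\le_e\po{h}$, hence $\limit{e}\le_e\po{h}$ and its free space stops before the critical section, never entering the junction. For a vehicle in its progress phase, the condition $\mathcal{C}_T$, which lives in that vehicle's vista yet constrains the \emph{arriving} vehicle (the flow-oriented, rights-based responsibility at work), does the bounding: for \textsf{merge-yield}/\textsf{cross-yield} it gives $\Brake(V_a)\le d_a$ (the term $V_a\cdot\timeTo{\cdots}$ being non-negative), so $\limit{a}\le_a\po{a}+_a\Brake(V_a)\le_a\po{a}+_a d_a=p$ and $a$'s free space cannot pass the crossing point; for \textsf{cross-traffic-light} and \textsf{cross-stop}, the conjuncts force every arriving vehicle occupying the junction to be stopped at its own entry ($\po{a}=\po{h_a}$, $\speed{a}=0$), where it faces its own protecting signal so that $\limit{a}\le_a\po{h_a}=\po{a}$ is a degenerate free space; the tie-break $\att{e}{st}<\att{a}{st}$ in \textsf{cross-stop} additionally guarantees that two mutually-yielding vehicles cannot both be progressing. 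In each subcase exactly the progressing vehicle occupies the junction.

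The hard part will be upgrading ``$a$'s free space does not pass $p$'' to ``$a$'s free space and $e$'s free space share no \emph{interior} junction position,'' since a junction is a set of edges with spatial extent rather than the single point $p$. I expect to close this gap using assumption~\ref{as:simple-maps} (merging/crossing points and junctions are neatly separated and each frontal visibility contains at most one critical position) together with the semantics of the critical distance $\cd{h}$, which places the protecting signal, and hence every caution-phase limit, strictly before the junction; the strict positivity of the $V_a\cdot\timeTo{\cdots}$ term in $\mathcal{C}_T$ then keeps the arriving vehicles' limit positions strictly before $p$. Assembling the three cases over all pairs $e,e'$ yields pairwise disjoint free spaces, completing the proof.
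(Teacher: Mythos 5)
Your speed-compliance argument is exactly the paper's, and much of your disjointness analysis also matches the paper's proof (which argues by contradiction rather than by direct classification---a cosmetic difference): the same-road case is settled via the definition of the front lead obstacle, and the yield cases by extracting $\Brake(V_a)\le d_a$ from $\mathcal{C}_T$ to pin the arriving vehicle's limit position at or before the crossing point. The genuine gap is in the \textsf{cross-traffic-light} case: you use only the third conjunct of $\mathcal{C}_T$ (any arriving vehicle inside the junction is stopped at its entry), which disposes of the configuration where one vehicle already occupies the junction, but not of the configuration where \emph{both} vehicles are in their progress phase and \emph{neither} has yet entered the junction. In that situation nobody ``occupies'' the junction, the third conjunct is vacuously true for both, and both free spaces extend across the common crossing position---nothing in your argument excludes it. (A variant of the same hole: a vehicle stopped exactly at its entry, $\po{a}=\po{h_a}$, $\speed{a}=0$, may already have been granted clearance and thus be in its progress phase, so its free space need not be degenerate as you assert; being at the entry with zero speed does not force the caution phase.) The paper closes precisely this case with the conjuncts you dropped: a progressing vehicle that has not passed its light must satisfy $\timeTo{vs,\po{h},\po{\fo{e}}} < \timeToRed{h}$, which forces $\timeToRed{h}>0$, i.e.\ its light is not red; if both transverse vehicles were in this situation, both lights would be non-red, contradicting the all-red phase assumed for the intersection (and the $\timeToGreen{h_a}$ conjunct plays the corresponding role when one vehicle is already inside). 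Without this timing argument your case analysis is incomplete.

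By contrast, the ``hard part'' you flag at the end---upgrading ``the arriving free space stops at or before $p$'' to disjointness in the sense of $\uplus$, given that a junction has spatial extent---is not a defect relative to the paper: the paper's own proof treats ``intersecting'' as sharing a common position and even concludes $\limit{c_2} <_{c_2} p$ from $\Brake(V_{c_2})\le d_{c_2}$, a strictness the inequality alone does not deliver. So on that point you are, if anything, more scrupulous than the source; the missing traffic-light argument above is what you actually need to add.
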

\begin{proof}
As the driving configuration of every vehicle $e$ is safe it holds $\po{e}
+ \Brake(\speed{e}) \le_e \limit{e}$.  By definition of the limit
position it holds $ \limit{e} \le_e \po{e} +_e \Brake(V_e)$.  Hence,
$\Brake(\speed{e}) \le \Brake(V_e)$ and so $\speed{e} \le V_e$.
Therefore, the state is compliant to speed limits.

By contradiction, suppose two vehicles $c_1$, $c_2$ have
intersecting free spaces.  First, consider that $\po{c_2}$ belongs
to the free space of $c_1$, that is, $\po{c_1} <_{c_1} \po{c_2}
<_{c_1} \limit{c_1}$.  This implies $c_2$ belongs to the front
obstacles of $c_1$ which is moreover located before $\fo{c_1}$ because
$\limit{c_1} \le_{c_1} \po{\fo{c_1}}$.  But this is impossible by the
definition of $\fo{c_1}$ which is either the nearest protecting signal
$h$ or the nearest front vehicle to $c_1$.

Second, consider there exists a common position $p$ belonging to the
two free spaces, that is, $\po{c_1} <_{c_1} p <_{c_1} \limit{c_1}$ and
$\po{c_2} <_{c_2} p <_{c_2} \limit{c_2}$.  This can happen only if $p$
is a merging or crossing position of the routes of $c_1$ and $c_2$.
Let $T_1$, $T_2$ be the types of driving configurations of $c_1$, $c_2$ respectively.
We consider the following cases, the others being symmetric:
\begin{itemize}
\item $T_1=$ \textsf{merge-yield}, $T_2=$ \textsf{road}:
that is, $c_1$ must give way to $c_2$ on the merging at position $p$.
As the free space of $c_1$ extends beyond $p$ that means $c_1$ is in
the progress phase, thus, following a front vehicle $\fo{c_1}$ located
beyond the critical section, that is, after $\po{h} +_{c_1} \cd{h}$.
Also, $c_2$ is perceived as an arriving vehicle in the configuration of $c_1$.
Since the merging configuration of $c_1$ is safe it satisfies the associated
invariant, hence we have $V_{c_2} \cdot \timeTo{\cfg_{c_1},\po{h}
+_{c_1} \cd{h},\po{\fo{c_1}}} + \Brake(V_{c_2}) \le d_{c_2})$.  This
implies  $\Brake(V_{c_2}) \le d_{c_2}$ and consequently
$\limit{c_2} <_{c_2} p$, thus contradicting the assumption that the
two free spaces intersect.
\item $T_1=$ \textsf{cross-yield}, $T_2=$ \textsf{road}: identical
to the previous case.
\item $T_1=T_2=$ \textsf{cross-traffic-light}: the free spaces could
intersect only if the two vehicles $c_1$ and $c_2$ are in the progress
phase.  If at least one of them is in the junction (i.e., $c_1$) then
the safety condition for the other (i.e. $c_2$) is violated.  That is,
the 3rd constraint fails for the configuration safety of $c_2$ whenever
$\po{c_1} \crossing \po{h}$.  If none of $c_1$ or $c_2$ has yet
entered the junction, their safety conditions would require for both
of them to have the traveling time to their respective traffic lights
lower than the time to pass to red, that is, the 1st constraint.  But
this would mean both of the two traffic lights have a non-red color,
which is forbidden.
\item $T_1=T_2=$ \textsf{cross-stop}: the free spaces could intersect
only if the two vehicles $c_1$ and $c_2$ are in the progress phase.
But then, if at least one of them is already in the junction, the
safety condition of the other one is violated, as in the previous
case.  Also, if both are at their entry points, the safety conditions
of their configurations would require for both of them to have the lowest stop
time, which is impossible.
\end{itemize}
\end{proof}

\subsection{Preservation of safe configurations}

Let respectively $q \eqdef q[t]$, $q' \eqdef q[t + \Delta t]$ be two
consecutive SADS states.  Let $e$ be a vehicle with states
$q_e \eqdef \langle \po{e}, s_e, \speed{e}, V_e, ... \rangle$,
$q_e' \eqdef \langle \po{e}', s_e', \speed{e}', V_e', ... \rangle$ and
configurations $\cfg \eqdef \cfg_e(q) \eqdef \langle q_e, q_F, q_A\rangle$,
$\cfg' \eqdef \cfg_e(q') \eqdef \langle q_e', q_{F'}', q_{A'}' \rangle$ at
$q$ and $q'$, respectively.  Let $T$, $T'$ be the types of the two
configurations above and $\mathsf{pol}_T(\cfg)$, $\mathsf{pol}_{T'}(\cfg')$ the
corresponding control policies.

Let respectively $\fo{e}$, $\fo{e}'$ be the front lead obstacles of
$e$ according to its configurations $\cfg$, $\cfg'$.  We call the step from $q$ to
$q'$ \emph{non-intrusive for} $e$ if either
$\po{\fo{e}} \le_e \po{\fo{e}'}$ or $\po{e}'
+_e \Brake(V_e') \le_e \po{\fo{e}'}$.  That is, either the position of
the front obstacle is progressing on the route of $e$, or it can
change arbitrarily as long as its distance to $\po{e}'$ is greater
than $\Brake(V_e')$.

We are now ready for proving two key preservation properties.  The
next two lemmas give the conditions for preservation of configuration safety,
assuming respectively, the types of the configurations do not or do change
between $q$ and $q'$.

\begin{lemma}\label{lemma:preserve-safe-vista-type-unchanged}
If the configuration $\cfg$ is safe, the states $q$ and $q'$ are compliant to
speed limits, the step from $q$ to $q'$ is non-intrusive for $e$ and
$\cfg'$ has the same type as $\cfg$ then, the configuration $\cfg'$ is safe.
\end{lemma}
\begin{proof}
\underline{$I_1(\cfg')$}. 
The condition $\po{e} +_e \Brake(\speed{e}) \le \limit{e}$ is
actually equivalent to conditions
\ref{it1:prop:speed-controllability}--\ref{it3:prop:speed-controllability} of
Proposition \ref{prop:speed-controllability} when considering the
(lead) position $p \eqdef \po{\fo{e}}$.  Then,
\begin{itemize}
\item conditions
  \ref{it5:prop:speed-controllability}--\ref{it6:prop:speed-controllability}
  of Proposition \ref{prop:speed-controllability} guarantee $\speed{e}' \le
  V_e'$ depending on the current applicable speed limit
\item condition \ref{it7:prop:speed-controllability} of
  Proposition \ref{prop:speed-controllability} guarantees $\po{e}' +_e
  \Brake(\speed{e}') \le \po{h} + \Brake(V_h)$ for the speed limits $h
  \in F_{sl}$ such that $\po{e}' \le_e \po{h} <_e \po{\fo{e}}$
\item condition \ref{it8:prop:speed-controllability} of
  Proposition \ref{prop:speed-controllability} guarantees $\po{e}' +_e
  \Brake(\speed{e}') \le_e \po{\fo{e}}$
\item the step from $q$ to $q'$ being non-intrusive for $e$ guarantees
  $\po{\fo{e}} \le_e \po{\fo{e}'}$ or $\po{e}' + \Brake(V_e') \le_e
  \po{\fo{e}'}$.
\end{itemize}
Using all the above we can infer $I_1(\cfg')$ that is
\[\po{e}' +_e \Brake(\speed{e}') \le_e \limit{e}' \eqdef
\min\nolimits_e(\po{\fo{e}'}, \po{e}' +_e \Brake(V_e'), (\po{h} +_e
\Brake(V_h))_{h \in F'_{sl}}\]
where $F_{sl}' = \{ h \in F' ~|~ h \mbox{ speed limit}, \po{h} <
\po{\fo{e}'}\}$.  In particular, observe that for any speed limit $h
\in F'_{sl}$ that is not taken into account in $F_{sl}$ we have
$\po{e}' +_e \Brake(\speed{e}') \le_e \po{\fo{e}} <_e \po{h} <_e
\po{h} + \Brake(V_h)$.

\underline{$I_2(\cfg')$} First, assume $\fo{e} = h \wedge \po{e} \le_e
\po{h}$, that is, the policy $\mathsf{pol}_T(\cfg)$ is in the caution
phase for configuration $\cfg$.  Two situations can happen during the step:
\begin{itemize}
\item \emph{no clearance:} then, at the next configuration $\cfg'$ we still have
  $\fo{e}' = h$ and we know from the previous point that $\po{e}' +_e
  \Brake(\speed{e}') \le_e \po{\fo{e}}$.  Then, we obtain immediately
  $\po{e}' \le_e \po{h}$, that is, $I_2(\cfg')$ holds for the caution
  phase of $\mathsf{pol}_T(\cfg')$.
\item \emph{clearance}: then, at the next configuration $\cfg'$ we will have
  $\fo{e}' = f$ for some front vehicle $f$ located beyond the critical
  section, that is, $\po{h}+_e \cd{h} \le_e \po{f}$.  Moreover, as the
  clearance condition holds at $\cfg$ for $\mathsf{pol}_T(\cfg)$ we can
  check that the specific condition $\mathcal{C}_T$ holds at $\cfg'$,
  for every type of merging or crossing configuration $T$.  That is, we obtain
  that $I_2(\cfg')$ holds for the progress phase of
  $\mathsf{pol}_T(\cfg')$.
\end{itemize}

Second, assume $\po{h} +_e \cd{h} \le_e \po{\fo{e}} \wedge
\mathcal{C}_T$, that is, the policy $\mathsf{pol}_T(\cfg)$ is in the
progress phase at configuration $\cfg$.  Then, we obtain that $\po{h} +_e \cd{h}
\le_e \po{\fo{e}'}$ if $\po{\fo{e}} \le_e \po{\fo{e}'}$.  This latter
condition holds if the step is non-intrusive and moreover, assuming
there are no entry points located immediately after the end of the
critical section, that is, Assumption~(\ref{as:simple-maps}).  Also, we
can check that the specific condition $\mathcal{C}_T$ holds at $\cfg'$
as well.  In particular, we use that the states $q$, $q'$ are speed
compliant and moreover the speed limits are non-increasing when
approaching critical road segments, that is,
Assumption~(\ref{as:speed-limits-consistent}).  In addition, we use the
following monotonicity property of $\timeTo{}$:
\[\timeTo{\cfg', p, \po{\fo{e}'}} \le \timeTo{\cfg, p, \po{\fo{e}}} - \Delta t\]
for any position $p$ between $\po{e}$ and $\po{\fo{e}}$ provided
$\po{\fo{e}} \le_e \po{\fo{e}'}$.  That is, the time to reach a
position can only improve due to the moving forward of the front
obstacle.
\end{proof}

\begin{lemma} \label{lemma:preserve-safe-vista-type-changed}
If the configuration $\cfg$ is safe, the states $q$ and $q'$ are compliant to
speed limits, the step from $q$ to $q'$ is non-intrusive for $e$ and
$\cfg'$ has a different type than $\cfg$ then, the configuration $\cfg'$ is safe.
\end{lemma}
\begin{proof}
  \underline{$I_1(\cfg')$}: as in the previous
  lemma. \underline{$I_2(\cfg')$}: First, consider $T=\textsf{road}$ and
  $T'\not=\textsf{road}$.  The change from the \textsf{road} policy to
  any other merging or crossing policy is performed because the first
  visible obstacle changes from a front vehicle $f$ (in $\cfg)$ to a
  signal $h$ (in $\cfg'$).  This implies $\po{f} <_e \po{h}$, otherwise
  $\cfg$ would not have been a \textsf{road} policy.  We know for the
  \textsf{road} policy that $\po{e}' \le_e \po{f}$, that is, the $e$
  vehicle must be able to stop behind the front vehicle $f$.
  Consequently, $\fo{e}' = h \wedge \po{e}' \le_e \po{h}$ holds in
  $\cfg'$, that is, $I_2(\cfg')$ holds on the first case.

  Second, consider $T\not=\textsf{road}$ and $T'\not=T$.  That is, the
  $e$ vehicle is in a merging or crossing configuration $\cfg$ and, while being
  in the progress phase, it goes beyond the signal $h$ and receives
  another configuration $\cfg'$ at the next step.  By
  Assumption~(\ref{as:simple-maps}) we know that all critical regions of
  the maps (that is, mergings, crossings, junctions, etc) are
  separated by road segments.  Henceforth, $T'=\textsf{road}$ unless
  the $e$ vehicle is perceiving another signal $h$ immediately after
  treating the configuration $\cfg$, which is impossible. Therefore, $I_2(\cfg')$ is
  trivially satisfied as $\cfg'$ is a \textsf{road} configuration.
\end{proof}

The next lemma gives a property about the evolution of the limit
position and the derived free space in a step.

\begin{lemma}\label{lemma:safe-vista-non-retracting-free-space}
If the configuration $\cfg$ is safe, the step from $q$ to $q'$ is non-intrusive
for $e$, then $\limit{e} \le_e \limit{e}'$, that is, the free space of
$e$ is \emph{non-retracting} between $q$ and $q'$.
\end{lemma}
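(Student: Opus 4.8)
The plan is to exploit the fact that $\limit{e}'$ is, by definition, a minimum over the route of $e$ of three families of braking horizons: the front-obstacle position $\po{\fo{e}'}$, the current-speed-limit horizon $\po{e}' +_e \Brake(V_e')$, and the upcoming-speed-limit horizons $\po{h} +_e \Brake(V_h)$ for $h \in F'_{sl}$. Since the minimum of a family of positions all lying at or beyond $\limit{e}$ itself lies at or beyond $\limit{e}$, it suffices to show separately that each of these three terms is $\ge_e \limit{e}$. Before the case analysis I would record two elementary consequences of the hypotheses. As $e$ follows its lead obstacle, the current step applies $\follow{vs,\po{\fo{e}}}$, and since $vs$ is safe, $I_1(vs)$ restates the controllability precondition of \autoref{prop:speed-controllability} at the lead position $p \eqdef \po{\fo{e}}$. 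Hence condition~\ref{it8:prop:speed-controllability} yields $\po{e}' +_e \Brake(\speed{e}') \le_e \po{\fo{e}}$, and with $\Delta d \ge 0$ and $\Brake \ge 0$ this gives the two facts $\po{e} \le_e \po{e}'$ (forward motion) and $\po{e}' \le_e \po{\fo{e}}$ (the lead obstacle is never overrun).

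For the upcoming-speed-limit horizons, I would fix $h \in F'_{sl}$, so that $\po{e}' <_e \po{h} <_e \po{\fo{e}'}$. If $\po{h} <_e \po{\fo{e}}$, then $h$ also belongs to $F_{sl}$ (it is a speed-limit sign strictly between $\po{e}$ and $\po{\fo{e}}$), so $\po{h} +_e \Brake(V_h)$ is one of the terms defining $\limit{e}$ and hence $\limit{e} \le_e \po{h} +_e \Brake(V_h)$. Otherwise $\po{\fo{e}} \le_e \po{h}$, and then $\limit{e} \le_e \po{\fo{e}} \le_e \po{h} \le_e \po{h} +_e \Brake(V_h)$ using $\Brake(V_h) \ge 0$. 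In both cases the term dominates $\limit{e}$, so newly appearing speed limits never shrink the free space.

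The delicate term is the current-speed-limit horizon $\po{e}' +_e \Brake(V_e')$, for which I would establish $\limit{e} \le_e \po{e}' +_e \Brake(V_e')$ by distinguishing whether a speed-limit sign is crossed during the step. By Assumption~\ref{as:speed-limits-consistent} the signs are neatly separated and, together with the small-step Assumption~\ref{as:reactive} bounding $\Delta d$, at most one such sign is crossed. If none is crossed, then $V_e' = V_e$ and, writing $\po{e}' = \po{e} +_e \Delta d$ with $\Delta d \ge 0$, we get $\po{e}' +_e \Brake(V_e') = \po{e} +_e (\Delta d + \Brake(V_e)) \ge_e \po{e} +_e \Brake(V_e) \ge_e \limit{e}$. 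If a sign $h$ is crossed, then $\po{e} <_e \po{h} \le_e \po{e}'$ and the limit enforced at $\po{e}'$ is $V_e' = V_h$; since $\po{h} \le_e \po{e}' \le_e \po{\fo{e}}$ by the first paragraph, the term $\po{h} +_e \Brake(V_h)$ dominates $\limit{e}$ exactly as above, whence $\po{e}' +_e \Brake(V_h) \ge_e \po{h} +_e \Brake(V_h) \ge_e \limit{e}$. I expect this speed-limit bookkeeping — identifying $V_e'$ with the crossed sign's limit and justifying that at most one sign is crossed — to be the main obstacle, and it is exactly where Assumptions~\ref{as:speed-limits-consistent} and~\ref{as:reactive} are needed.

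Finally, for the front-obstacle term I would invoke non-intrusiveness. Since $\limit{e} \le_e \po{\fo{e}}$ always holds, if the step is non-intrusive via $\po{\fo{e}} \le_e \po{\fo{e}'}$ then immediately $\limit{e} \le_e \po{\fo{e}'}$; if instead it is non-intrusive via $\po{e}' +_e \Brake(V_e') \le_e \po{\fo{e}'}$, then combining with the bound $\limit{e} \le_e \po{e}' +_e \Brake(V_e')$ from the previous paragraph gives $\limit{e} \le_e \po{\fo{e}'}$ as well. Having shown that all three families of terms defining $\limit{e}'$ are $\ge_e \limit{e}$, I conclude $\limit{e} \le_e \limit{e}'$, that is, the free space of $e$ is non-retracting between $q$ and $q'$.
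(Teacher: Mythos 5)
Your proof is correct and takes essentially the same route as the paper's: both reduce $\limit{e} \le_e \limit{e}'$ to dominating, one by one, the three families of terms in the min-definition of $\limit{e}'$ (the front-obstacle position via non-intrusiveness, the current speed-limit horizon via the case split on whether the enforced limit changes during the step, and the newly visible speed-limit horizons via membership in $F_{sl}$ or position beyond $\po{\fo{e}}$). The only cosmetic differences are that you bound each term below by $\limit{e}$ directly instead of exhibiting a dominated term of $\limit{e}$, you make explicit the use of condition~\ref{it8:prop:speed-controllability} of \autoref{prop:speed-controllability} where the paper leaves it implicit, and your ``at most one sign crossed'' remark is not actually needed (the argument works verbatim with the last crossed sign).
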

\begin{proof}
In general, recall that $\min X \le \min Y$ iff $\forall y \in
Y~\exists x \in X.~x \le y$.  Therefore, let us consider the terms
used in the min-definition of $\limit{e}'$, respectively
\begin{itemize}
\item $\po{\fo{e}'}$: as the step from $q$ to $q'$ is non-intrusive
we know that $\po{\fo{e}} \le_e \po{\fo{e}'}$ or $\po{e}'
+_e \Brake(V_e') \le_e \po{\fo{e'}}$.  In the first case we are done (as
$\po{\fo{e}}$ occurs in the min definition of $\limit{e}$).  In the second
case we are also done as $\po{\fo{e}'}$ is not the minimal value for the
definition of $\limit{e}'$.
\item $\po{e}' +_e \Brake(V_e')$: we have either $V_e = V_e'$, that is,
the speed limit does not change, or $V_e \not= V_e'$.  In the first
case, $\po{e} +_e \Brake(V_e)$ occurs in the min definition of
$\limit{e}$ and moreover $\po{e} \le_e \po{e}'$.  In the second case,
$V_e' = V_h$ for some speed limit $h \in F_{sl}$ visible for the ego
vehicle $e$ in state $q$. But then, $p_h +_e \Brake(V_h)$ occurs in
the min definition of $\limit{e}$ and moreover
$\po{e} \le_e \po{h} \le_e \po{e}'$.  Hence, $\po{h}
+_e \Brake(V_h) \le_e \po{e}' +_e \Brake(V_e')$.
\item $\po{h} +_e \Brake(V_h)$ for $h \in F'_{sl}$:
we have either $h \in F_{sl}$ or $h \in F'_{sl} \setminus F_{sl}$.  In
the first case, we are done, as the same term occurs on both sides.  In
the second case, we have $\po{\fo{e}} \le_e \po{h} <_e \po{h}
+ \Brake(V_h)$.
\end{itemize}
\end{proof}

\subsection{Preservation of state safety}

Let respectively $q \eqdef q[t]$, $q' \eqdef q[t + \Delta t]$ be two
consecutive SADS states.  We say that the step from $q$ to $q'$
\emph{has no gaps} if for any vehicle $e$, $\po{e}' \in fs_e$, that
is, the $e$ vehicle moves within its free space.

\begin{lemma} \label{lemma:no-gaps}
  If $q$ is safe then, the step from $q$ to $q'$ has no gaps and is
  collision-free.
\end{lemma}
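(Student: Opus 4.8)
The plan is to prove the two conclusions in turn, both resting on the safety of $q$ and on the assumptions about $\Brake$, $\DeltaV$ and $\Delta t$. Fix a vehicle $e$ with vista $vs$ of type $T$. Whatever the phase, the step applies one $\follow{vs,\po{\fo{e}}}$ command, so that $\po{e}' = \po{e} +_e \Delta d$ where $(\Delta v,\Delta d) = \DeltaV(\speed{e},\mathit{VL})$. Since $q$ is safe, $I_1(vs)$ gives $\po{e} +_e \Brake(\speed{e}) \le_e \limit{e}$, and by Lemma~\ref{lemma:safe-is-speed-compliant-and-disjoint} the state is speed-compliant, i.e. $\speed{e}\le V_e$. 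To establish \emph{no gaps} I must show $\po{e} \le_e \po{e}' \le_e \limit{e}$. The lower bound is immediate, since $\DeltaV$ returns the non-negative distance $\Delta d \ge 0$ travelled during $\Delta t$.

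For the upper bound I would write $\limit{e} = \po{e} +_e \ell$ with $\ell = \min\big(\po{\fo{e}}-\po{e},\ \Brake(V_e),\ ((\po{h}-\po{e})+\Brake(V_h))_{h\in F_{sl}}\big)$ and prove $\Delta d \le \ell$ by bounding $\Delta d$ against each term of the minimum. Two families of terms are handled directly by the postconditions of \autoref{prop:speed-controllability}: condition~\ref{it8:prop:speed-controllability} yields $\Delta d + \Brake(\speed{e}+\Delta v) \le \po{\fo{e}}-\po{e}$, hence $\Delta d \le \po{\fo{e}}-\po{e}$ because $\Brake\ge 0$; and for a speed limit $h\in F_{sl}$ that the vehicle does \emph{not} overtake in the step ($\po{e}' \le_e \po{h}$) we get $\Delta d \le \po{h}-\po{e} \le (\po{h}-\po{e})+\Brake(V_h)$ trivially, refined through condition~\ref{it7:prop:speed-controllability} when needed.

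The delicate term is $\Brake(V_e)$, the constraint attached to the current speed limit, together with any future limit $h$ that the vehicle would actually cross inside one step. Here I would combine speed-compliance, $\speed{e}\le V_e$, with the postconditions \ref{it5:prop:speed-controllability}--\ref{it6:prop:speed-controllability}, which keep the new speed below the enforced limit: as long as the vehicle stays inside the current zone its instantaneous speed remains $\le V_e$, so $\Delta d \le V_e\cdot\Delta t$, and Assumption~\ref{as:reactive} ($v\cdot\Delta t \le \Brake(v)$ for $v\ge V_0$) gives $\Delta d \le \Brake(V_e)$, as required. I expect \emph{this} to be the main obstacle: ruling out that a single step overshoots a constraining speed-limit term by jumping past its boundary. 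This is precisely where Assumption~\ref{as:speed-limits-consistent} (speed limits neatly separated and non-increasing towards critical segments) and the reactivity Assumption~\ref{as:reactive} must be used together, so that whenever $\po{e}+_e\Brake(V_e)$ is the active minimum the step cannot step beyond it.

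Finally I would derive \emph{collision-freedom} directly from \emph{no gaps}. Since $q$ is safe, Lemma~\ref{lemma:safe-is-speed-compliant-and-disjoint} guarantees that the free spaces $\{fs_e\}_e$ are pairwise disjoint, and the no-gaps property just established places each new position $\po{e}'$ inside its own $fs_e$. Two distinct vehicles $c_1$, $c_2$ therefore end the step at positions lying in the disjoint routes $fs_{c_1}$, $fs_{c_2}$, so $\po{c_1}'$ and $\po{c_2}'$ can neither coincide nor lie on a shared segment; hence the step is collision-free. The only positions needing care are the shared merging/crossing points, but these are exactly the ones excluded by the disjointness argument of Lemma~\ref{lemma:safe-is-speed-compliant-and-disjoint} (via the clearance conditions $\mathcal{C}_T$), so no collision can arise there either.
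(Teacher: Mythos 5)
Your proof has the same skeleton as the paper's: no-gaps is obtained by bounding the one-step displacement $\Delta d$ against the terms of the minimum defining $\limit{e}$, and collision-freedom then follows, exactly as in the paper, from no-gaps combined with the disjointness of free spaces given by \autoref{lemma:safe-is-speed-compliant-and-disjoint}. Parts of your case analysis are even sharper than the paper's own text: you discharge the front-obstacle term $\po{\fo{e}}$ cleanly via postcondition \ref{it8:prop:speed-controllability} of \autoref{prop:speed-controllability}, and you observe that any speed-limit term whose sign is not passed during the step is trivially respected.

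The genuine gap is the case you yourself call ``the main obstacle'' and then leave unproved: the active minimum is a downstream term $\limit{e} = \po{h} +_e \Brake(V_h)$ for a speed-limit sign $h$ that the vehicle \emph{does} pass during the step. Neither of your bounds applies there (the trivial one needs $\po{e}' \le_e \po{h}$, and $\Delta d \le V_e \cdot \Delta t \le \Brake(V_e)$ only holds while the vehicle stays in the current zone and anyway bounds the wrong quantity), and invoking Assumptions~\ref{as:speed-limits-consistent} and~\ref{as:reactive} ``together'' is a declaration of intent, not an argument --- note also that your closing sentence drifts back to the already-settled term $\po{e} +_e \Brake(V_e)$. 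The paper closes precisely this case with a construction your proposal lacks: define a virtual speed $V'$ by $\Brake(V') = (\po{h} +_e \Brake(V_h)) - \po{e}$, which exists and satisfies $V' \ge V_h \ge V_0$ by strict monotonicity of $\Brake$. The invariant $I_1(vs)$ gives $\speed{e} \le V'$, and the postconditions of \autoref{prop:speed-controllability} give $\speed{e} + \Delta v \le V'$ in both sub-cases (condition~\ref{it7:prop:speed-controllability} if $\po{h}$ is not passed; condition~\ref{it6:prop:speed-controllability}, yielding $\speed{e}+\Delta v \le V_h \le V'$, if it is), so the speed throughout the step is at most $V'$. Assumption~\ref{as:reactive} applied to $V'$ then gives $\Delta d \le V' \cdot \Delta t \le \Brake(V') = (\po{h} +_e \Brake(V_h)) - \po{e}$, which is exactly the required bound. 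Without this (or an equivalent) argument your case analysis is incomplete, because Assumption~\ref{as:reactive} as stated only bounds one-step travel by $\Brake$ evaluated at speeds the vehicle can actually attain, not by the ``distance budget'' that a downstream limit term represents.
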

\begin{proof}
  Let fix some vehicle $e$.  Remember that the vehicle $e$ is following the
  front leading obstacle $\fo{e}$ in the current execution step.
  Therefore, it must be able to stop at $\fo{e}$.  We distinguish two
  cases.  First, if the limit position $\limit{e}$ is equal to
  $\po{\fo{e}}$ the vehicle must therefore be able to stop at
  $\limit{e}$ hence, it moves within its free space.  Otherwise, the
  limit position $\limit{e}$ is strictly smaller than $\po{\fo{e}}$ and
  we further distinguish two sub-cases:
  \begin{itemize}
  \item $\limit{e} = \po{e} +_e \Brake(V_e)$: then, the maximal speed
    of $e$ during the step could be $V_e \ge V_0$, and using
    Assumption~(\ref{as:reactive}) we obtain $V_e \cdot \Delta t \le
    \Brake(V_e)$ hence, the $e$ vehicle moves in its free space
  \item $\limit{e} = \po{h} +_e \Brake(V_h)$ for some speed limit $h$
    located ahead on the route of $e$: then, the maximal speed of $e$
    during this step could be $V' \ge V_h \ge V_0$ such that
    $\Brake(V') = (\po{h} +_e \Brake(V_h)) - \po{e}$.  Again, using
    Assumption~(\ref{as:reactive}) we obtain $V' \cdot \Delta t \le
    \Brake(V')$ which implies that the vehicle $e$ moves within its
    free space.
  \end{itemize}
  In conclusion, the step has no gaps.  As $q$ is safe, by
  Lemma \ref{lemma:safe-is-speed-compliant-and-disjoint} the free spaces
  are disjoint.  Hence, the step is collision-free.
\end{proof} 

\begin{lemma} \label{lemma:safe-is-non-intrusive}
If $q$ is safe then, the step from $q$ to $q'$ is non-intrusive for
all vehicles.
\end{lemma}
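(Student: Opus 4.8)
The plan is to fix an arbitrary vehicle $e$ and establish one of the two disjuncts of non-intrusiveness, namely either $\po{\fo{e}} \le_e \po{\fo{e}'}$ (the front lead obstacle only advances on $e$'s route) or $\po{e}' +_e \Brake(V_e') \le_e \po{\fo{e}'}$ (a closer new lead obstacle is still at a safe braking distance). From the hypothesis that $q$ is safe I would first harvest two facts proved earlier: by \autoref{lemma:safe-is-speed-compliant-and-disjoint} the state $q$ is compliant to speed limits and the free spaces of all vehicles are pairwise disjoint, and by \autoref{lemma:no-gaps} the vehicle moves within its free space, so that $\po{e}' \le_e \limit{e} \le_e \po{e} +_e \Brake(V_e)$, which bounds how far $e$ can have progressed in one period.

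The first step is to dispose of every case in which the lead obstacle does not come closer, establishing the first disjunct. The front lead obstacle $\fo{e}$ is either a protecting signal or a front vehicle. Signals are static, so a signal that remains the lead satisfies $\po{\fo{e}} = \po{\fo{e}'}$. A real front vehicle on $e$'s route can only advance, the fictitious obstacle at the frontal visibility limit only moves forward by Assumption~\ref{as:visibility-non-retracting}, and no vehicle already on $e$'s route can overtake $e$ without violating collision-freedom, so no previously farther vehicle can become a closer lead. For the transition where $e$ itself finishes a maneuver (crossing or merging into the \textsf{road} policy), the new lead is a front vehicle located beyond the critical section, hence $\po{\fo{e}} = \po{h} \le_e \po{h} +_e \cd{h} \le_e \po{\fo{e}'}$; and for the opposite road-to-merging/crossing transition, the analysis in the proof of \autoref{lemma:preserve-safe-vista-type-changed} shows the newly relevant signal lies beyond the former front vehicle. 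In all these situations the first disjunct holds.

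The crux, and the only remaining possibility, is that a vehicle $f' = \fo{e}'$ has joined $e$'s route strictly ahead of $e$ and closer than $\fo{e}$ during the step; by the reasoning above this can occur only through $f'$ executing the progress phase of a \textsf{merge-yield} or \textsf{lane-change} maneuver onto $e$'s road or lane. Here I would invoke the rights-based responsibility principle: $e$ is exactly the arriving, respectively back, vehicle that $f'$'s clearance condition protects. For \textsf{merge-yield} the condition committed to at $q$ reads $V_e \cdot \timeTo{vs_{f'}, \po{h}+_e \cd{h}, \po{\fo{f'}}} + \Brake(V_e) \le d$, with $d$ the distance from $e$ to the merge point, so in particular $\Brake(V_e)$ is bounded by $d$ minus the distance $e$ can cover during the maneuver; for \textsf{lane-change} it reads directly $\Brake(V_e) \le d_{a_2}$, the gap from $e$ to $f'$. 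Combining such a clearance bound with the no-gap estimate $\po{e}' \le_e \po{e} +_e \Brake(V_e)$ from \autoref{lemma:no-gaps}, the forward motion $\po{f'} \le_e \po{f'}'$, and the fact that speed limits do not increase when approaching the critical segment (Assumption~\ref{as:speed-limits-consistent}), which gives $V_e' \le V_e$ and hence $\Brake(V_e') \le \Brake(V_e)$, I would derive $\po{e}' +_e \Brake(V_e') \le_e \po{\fo{e}'}$, the second disjunct.

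The main obstacle I anticipate is precisely this last step: correctly matching $e$ to the role ($a$ or $a_2$) it plays inside the intruding vehicle's vista, and then verifying the single-step position inequalities so that a clearance condition guaranteeing a safe gap at the decision instant still certifies a safe braking distance at $q'$ after both $e$ and $f'$ have advanced. This needs the bound $\po{e}' \le_e \po{e} +_e \Brake(V_e)$ together with Assumption~\ref{as:reactive} to control how far $e$ progresses in one cycle, and the monotonicity of $\timeTo{}$ to absorb the $V_e \cdot \timeTo{}$ term in the merging case. Everything else reduces to a routine enumeration of the policies through which a vehicle can enter $e$'s route.
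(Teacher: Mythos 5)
Your overall strategy (a direct proof: dispatch the cases where the lead obstacle does not come closer, then handle a newly intruding vehicle via safety conditions holding at $q$) is workable, and your use of \autoref{lemma:safe-is-speed-compliant-and-disjoint} and \autoref{lemma:no-gaps} is legitimate (both precede this lemma and do not depend on it). The paper argues by contradiction, but that framing difference is immaterial. The genuine gap is in your crux case: you claim that a new vehicle $f'=\fo{e}'$ can intrude onto $e$'s route only by executing the progress phase of a \textsf{merge-yield} or \textsf{lane-change} maneuver, and you then rely exclusively on the \emph{intruder's} safety/clearance condition, with $e$ cast as the protected arriving or back vehicle. That enumeration is incomplete, and the missing cases require a different certificate.

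Concretely: if $e$ is itself in the progress phase of a \textsf{merge-yield} or \textsf{cross-yield} vista, the intruder can be a vehicle traveling on the high-priority road --- hence in a plain \textsf{road} vista, performing no merging or lane-change maneuver at all --- which enters or crosses $e$'s route at the critical section. The \textsf{road} policy carries no condition $\mathcal{C}_T$ (its $I_2$ is trivially true), so nothing in the intruder's vista protects $e$; the contradiction must instead come from $e$'s \emph{own} safety invariant $V_a \cdot \timeTo{vs,\po{h}+_e\cd{h},\po{\fo{e}}} + \Brake(V_a) \le d_a$, which guarantees the arriving vehicle cannot reach the critical section within the step. Likewise, when $e$ is in a \textsf{cross-traffic-light} or \textsf{cross-stop} vista, the intruder is another junction-crossing vehicle, and exclusion follows from the mutual constraints encoded in both vehicles' safety conditions (a non-red light on a transverse road is forbidden; equal stop times are excluded), not from any merge or lane-change clearance. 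This is precisely how the paper organizes the argument: it splits on the vista type $T$ of $e$ at $q$ --- \textsf{road}; \textsf{merge-yield}/\textsf{cross-yield}; \textsf{cross-traffic-light}/\textsf{cross-stop} --- invoking the intruder's vista safety only in the first case and $e$'s own vista safety in the other two. Your proof covers essentially only the first case (and even there omits \textsf{cross-yield} intruders), so as written it fails whenever $e$ is itself the merging, crossing, or junction-traversing vehicle.
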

\begin{proof}
  By contradiction, assume the step from $q$ to $q'$ is intrusive for
  some vehicle $e$, that is, we $\po{\fo{e}'} <_e \po{\fo{e}}$ and
  $\po{\fo{e}'} <_e \po{e}' +_e \Brake(V_e')$.  This means that the front
  leading obstacle has shifted backwards and is moreover located below
  the speed limit range $\po{e}' +_e \Brake(V_e')$.

  We shall distinguish two cases, depending on the nature of
  $\fo{e}'$.  First, consider that $\fo{e}'$ is a signal $h$.  But then,
  $h$ was already in the visibility range at state $q$ and, as $\po{h}
  < \po{\fo{e}}$ it has been already considered (i.e.,
  cleared) in the caution phase of the policy.
  
  Second, consider that $\fo{e}'$ is a vehicle.  If $\fo{e}'$ is the
  fictitious vehicle located at the front visibility limit then, the
  visibility limit has retracted from the previous state, where the
  position $\po{\fo{e}}$ was visible.  This contradicts
  Assumption~(\ref{as:visibility-non-retracting}).  Otherwise, as
  vehicles are never moving backwards on their routes, $\fo{e}'$ must
  be some new vehicle that enters or crosses the road of the $e$
  vehicle in the new configuration $\cfg'$ of $e$ at $q'$.  Then, at state $q$,
  the vehicle $e$ is approaching a merging, crossing or junction and
  as $\po{fo_e'} < \po{\fo{e}}$, it must be in a progress
  phase by Assumption~(\ref{as:simple-maps}).  We distinguish the
  following situations depending on the configuration type $T$ of $e$ at at
  state $q$:
  \begin{itemize}
  \item $T=\textsf{road}$: then, the vehicle $e$ is driving on a
    high-priority road, which is crossed or joined by some
    low-priority road.  If some other vehicle $\fo{e}'$ appears in 
    front of the $e$ vehicle such that $\po{\fo{e}'} <_e \po{e}' +_e
    \Brake(V_e')$ that means $\fo{e'}$ would be also in the progress
    phase of its configuration of type \textsf{merge-yield} or
    \textsf{cross-yield}.  But then, this contradicts the safety of
    this configuration of $\fo{e}'$.
  \item $T=\textsf{merge-yield}$ or $T=\textsf{cross-yield}$: this
    situation is the dual of the above and we have a similar
    contradiction for the safety of the configuration of $e$, which must not
    be in the progress phase while another (high-priority arriving)
    vehicle (that is, $\fo{e}'$) could enter on its road.
  \item $T=\textsf{cross-traffic-light}$ or $T=\textsf{cross-stop}$:
    again, these situations contradict the safety of the configuration of $e$,
    which means that no other vehicle may cross the junction at the same
    time.
  \end{itemize}  
\end{proof}

\begin{lemma} \label{lemma:safe-inductive}
If $q$ is safe then $q'$ is speed compliant and safe.
\end{lemma}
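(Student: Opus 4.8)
The plan is to close the induction by combining the preservation lemmas already proved, but the order of steps matters: the hypotheses of \autoref{lemma:preserve-safe-vista-type-unchanged} and \autoref{lemma:preserve-safe-vista-type-changed} require \emph{both} $q$ and $q'$ to be compliant to speed limits, whereas speed compliance of $q'$ would naively seem to follow only from the safety of $q'$ that we are trying to prove. I would therefore first establish speed compliance of $q'$ \emph{directly}, independently of the safety of $q'$, so as to break this apparent circularity.

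Since $q$ is safe, \autoref{lemma:safe-is-speed-compliant-and-disjoint} already gives that $q$ is speed compliant. Next I would show $q'$ is speed compliant. Fix a vehicle $e$ with its safe vista $vs$. Whatever its policy type, the step executed in this cycle is an application of $\follow{vs,\po{\fo{e}}}$, where $\fo{e}$ is by definition the obstacle whose position is followed. The invariant $I_1(vs)$, namely $\po{e} +_e \Brake(\speed{e}) \le_e \limit{e}$, together with the definition of $\limit{e}$ as the minimum of $\po{\fo{e}}$, $\po{e} +_e \Brake(V_e)$ and the speed-limit terms $\po{h} +_e \Brake(V_h)$, is precisely the conjunction of the controllability preconditions \ref{it1:prop:speed-controllability}--\ref{it3:prop:speed-controllability} of \autoref{prop:speed-controllability} taken at $p \eqdef \po{\fo{e}}$. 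Hence the call to $\DeltaV$ returns a valid pair and the postconditions \ref{it5:prop:speed-controllability}--\ref{it6:prop:speed-controllability} apply, yielding $\speed{e}' = \speed{e} + \Delta v \le V_e'$, where $V_e'$ is the limit enforced at the new position $\po{e}' = \po{e} +_e \Delta d$. As this holds for every vehicle, $q'$ is speed compliant.

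With both $q$ and $q'$ now known to be speed compliant, I would invoke \autoref{lemma:safe-is-non-intrusive}: since $q$ is safe, the step from $q$ to $q'$ is non-intrusive for all vehicles. Then, for each vehicle $e$, its vista $vs$ is safe, $q$ and $q'$ are speed compliant, and the step is non-intrusive for $e$, so exactly one of \autoref{lemma:preserve-safe-vista-type-unchanged} (when $vs'$ has the same type as $vs$) or \autoref{lemma:preserve-safe-vista-type-changed} (when the type changes) applies and gives that $vs'$ is safe. As every vista $vs'$ at $q'$ is safe, the state $q'$ is safe, which completes the proof.

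The main obstacle, and the step I would be most careful about, is the first one: verifying rigorously that $I_1(vs)$ is equivalent to the controllability preconditions of \autoref{prop:speed-controllability}, so that speed compliance of $q'$ can be derived \emph{before} — and hence independently of — the safety of $q'$. Once that equivalence is made explicit, the rest is essentially bookkeeping: a case split on whether the vista type is preserved and a uniform appeal to the two preservation lemmas, followed by the universal quantification over vehicles.
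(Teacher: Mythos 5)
Your proof is correct and follows essentially the same route as the paper's: establish speed compliance of $q$ via \autoref{lemma:safe-is-speed-compliant-and-disjoint}, derive speed compliance of $q'$ directly from $I_1(vs)$ and the postconditions of \autoref{prop:speed-controllability}, then obtain non-intrusiveness from \autoref{lemma:safe-is-non-intrusive} and conclude with the two vista-preservation lemmas. The circularity-breaking order you emphasize is exactly the structure of the paper's own argument.
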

\begin{proof}
  First, by Lemma \ref{lemma:safe-is-speed-compliant-and-disjoint} the
  state $q$ is speed compliant.
  
  Second, we prove that state $q'$ is also speed compliant.  Consider an arbitrary
  $e$ vehicle and its safe configuration $\cfg$.  As in
  Lemma \ref{lemma:preserve-safe-vista-type-unchanged}, the condition
  $\po{e} + \Brake(\speed{e}) \le \limit{e}$ guarantees the conditions
  \ref{it1:prop:speed-controllability}--\ref{it3:prop:speed-controllability}
  of Proposition \ref{prop:speed-controllability} when the $e$ vehicle is
  following the front leading obstacle $\fo{e}$.  Then, the conditions
  \ref{it5:prop:speed-controllability}--\ref{it6:prop:speed-controllability}
  guarantee $v_e' \le V_e'$, depending on the applicable speed limit
  $V_e'$ at the next state $q'$.  

  Third, by Lemma \ref{lemma:safe-is-non-intrusive} the step from $q$ to
  $q'$ is non-intrusive for all vehicles.  We just proven that $q$ and
  $q'$ are speed compliant.  Then, as all configurations $\cfg$ are safe at $q$
  by Lemma \ref{lemma:preserve-safe-vista-type-unchanged} and
  Lemma \ref{lemma:preserve-safe-vista-type-changed} all configurations $\cfg'$
  are safe at $q'$.  Hence, $q'$ is safe.
\end{proof}
 
\begin{theorem}
If the SADS is initially safe, it will remain so throughout its
execution. That is, the execution will preserve safety by avoiding
collisions and complying with applicable traffic regulations.
\end{theorem}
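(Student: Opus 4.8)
The plan is to prove the theorem by a straightforward induction on the discrete time steps $t = n\Delta t$, taking \emph{state safety} as the inductive invariant. Essentially all of the work has already been carried out in the preceding lemmas, so the theorem reduces to assembling them correctly and then reading off the two stated guarantees. For the base case, I would simply invoke the hypothesis: the initial state $q[0]$ is safe, so the invariant holds at $t = 0$.

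For the inductive step I would assume $q[t]$ is safe and conclude that $q[t+\Delta t]$ is safe, which is exactly the statement of \autoref{lemma:safe-inductive}. It is worth recalling how that lemma already chains the ingredients together: from safety of $q[t]$ one gets speed compliance of $q[t]$ via \autoref{lemma:safe-is-speed-compliant-and-disjoint}, then non-intrusiveness of the step for every vehicle via \autoref{lemma:safe-is-non-intrusive}, and finally safety of each vista at $q[t+\Delta t]$ via \autoref{lemma:preserve-safe-vista-type-unchanged} and \autoref{lemma:preserve-safe-vista-type-changed}, according to whether the vista type is preserved or changes. By induction, every reachable state $q[t]$ is therefore safe.

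It then remains to extract the two safety guarantees from this invariant. Collision avoidance follows step by step from \autoref{lemma:no-gaps}: since every $q[t]$ is safe, every step has no gaps and is collision-free, so no collision ever occurs. Compliance with traffic regulations is encoded directly in the safety predicate and so is inherited at every state. Speed-limit compliance is given by \autoref{lemma:safe-is-speed-compliant-and-disjoint}; the remaining rules — giving way at merges, respecting priorities at yield-protected crossings, not entering on red, and obeying stop-sign arbitration — are precisely the conditions $I_2$ together with the per-type clearance conditions $\mathcal{C}_T$ of \autoref{tab:vista-conditions} that constitute a safe vista. Since every vista is safe throughout the execution, these rules hold throughout.

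The main obstacle is not in this final assembly, which is routine, but in the preparatory \autoref{lemma:safe-is-non-intrusive} that the induction relies on: one must show that global state safety rules out any front lead obstacle jumping backward into a vehicle's braking range $\po{e}' +_e \Brake(V_e')$. This is where the rights-based coupling between vehicles does the real work — a low-priority vehicle appearing in front while in its own progress phase would contradict the high-priority vehicle's safe vista, and symmetrically for the crossing and stop cases. Once non-intrusiveness is secured, the preservation lemmas apply uniformly and the induction closes, giving the result for an arbitrary number of vehicles and any admissible map.
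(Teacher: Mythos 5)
Your proposal is correct and follows essentially the same route as the paper: induction on steps via \autoref{lemma:safe-inductive}, with collision-freedom extracted from \autoref{lemma:safe-is-speed-compliant-and-disjoint} and \autoref{lemma:no-gaps}, and traffic-rule compliance read off from the safety predicate itself. Your write-up is merely more explicit about the internal chaining of the lemmas and about which traffic rules are encoded in $I_2$ and $\mathcal{C}_T$, but the decomposition and key lemmas are identical to the paper's.
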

\begin{proof}
If the SADS is initially safe, by Lemma \ref{lemma:safe-inductive} all
the states along the execution are safe.  Moreover, by
Lemma \ref{lemma:safe-is-speed-compliant-and-disjoint} these states are
compliant to speed limits and collision free.   Furthermore, by
Lemma \ref{lemma:no-gaps} the steps between them are
collision-free.
\end{proof}

\section{Discussion}\label{sec:discussion}
This paper makes an original and fundamental contribution in a
direction rarely explored for ADS, where the state of the art in
autopilot design focuses on two diametrically opposed approaches, each
with inherent limitations in guaranteeing their safety. On the one
hand, AI-based end-to-end solutions cannot provide the necessary
safety guarantees. On the other hand, in addition to the inherent
limits of complexity, traditional control-theoretic approaches
over-simplify the problem to make it accessible to mathematical
analysis, and are far from being able to realistically take into
account the important details associated with different driving
operations. The proposed method makes it possible to construct a safe
rule-based autopilot from configurations representing the perceived
state of the vehicle environment. The method offers an integrated
solution for the three aspects of prediction, planning and control,
and allows direct implementation involving simple calculations.

The idea that driving a vehicle boils down to a composition of skills,
each dealing with specific situations, is widely held. Advocated by
\cite{Albus22}, it is also adopted in validation techniques that focus
on particular classes of pre-crash scenarios involving high-risk
operations \cite{NajmSY07}. The paper demonstrates the advantages of
compositional reasoning, breaking down the general problem into
sub-problems that can be tackled by successive realistic
simplifications.

A first class of simplifications is based on the application of rules
whose scope is limited by context and knowledge, to which is added the
principle of rights-based responsibility. This leads to the definition
of the notion of driving configuration for each vehicle, and the
corresponding free space in which it can circulate in complete safety.
The second class of simplifications comes from the fact that analysis
of autopilot inputs enables classification into a very limited number
of types of configurations, each characterized by a corresponding
control policy. Hence, the compositionality result according to
which safe driving for each type of configuration implies safe driving for any
route.  Finally, a third class of simplifications derives from the
assumption that vehicles drive responsibly, strictly observing the
traffic rules and staying within their allocated free space in all
cases. In this way it is possible to envisage minimal configurations
for each vehicle, including a single vehicle in front of it and
vehicles whose routes may cross its own. This greatly simplifies the
mathematical analysis, which deduces the invariant constraints to be
respected by control policies on the basis of knowledge of two
functions characterizing a vehicle's controllability. These functions
and their properties define a kind of contract between the autopilot
and the underlying electrical control system for braking and steering.
They are an essential element of the proposed solution, as they
provide the predictability on which safe control policies can be
built.

The approach presented rests on solid theoretical foundations and
highlights a third way of constructing ADS that deserves to be
explored further. On the one hand, it should be implemented and
validated in a simulation environment. On the other hand, it should be
extended to the study of safety policies for other operations such as
U-turns and parking maneuvers.  In addition, their application needs
to be refined by considering more detailed maps with two-dimensional
road representation and the integration of autopilot with trajectory
control modules.

\bibliographystyle{splncs04}
\bibliography{biblio-compact}

@TechReport{Autoware-1.1,
  author = 	 {Autoware},
  title = 	 {Autoware User's Manual},
  institution =  {Autoware},
  year = 	 {2017},
  OPTkey = 	 {},
  OPTtype = 	 {},
  number = 	 {1.1},
  OPTaddress = 	 {},
  OPTmonth = 	 {},
  OPTnote = 	 {},
  OPTannote = 	 {}
}

@TechReport{Apollo-3.0,
  author = 	 {Baidu},
  title = 	 {Apollo 3.0 Software Architecture},
  institution =  {Baidu},
  year = 	 {2018},
  OPTkey = 	 {},
  OPTtype = 	 {},
  OPTnumber = 	 {},
  OPTaddress = 	 {},
  OPTmonth = 	 {},
  OPTnote = 	 {},
  OPTannote = 	 {}
}

@article{ChittaPJYRG23,
  author       = {Kashyap Chitta and
                  Aditya Prakash and
                  Bernhard Jaeger and
                  Zehao Yu and
                  Katrin Renz and
                  Andreas Geiger},
  title        = {TransFuser: Imitation With Transformer-Based Sensor Fusion for Autonomous
                  Driving},
  journal      = {{IEEE} Trans. Pattern Anal. Mach. Intell.},
  volume       = {45},
  number       = {11},
  pages        = {12878--12895},
  year         = {2023}
}

@inproceedings{ShaoW00022,
  author       = {Hao Shao and
                  Letian Wang and
                  Ruobing Chen and
                  Hongsheng Li and
                  Yu Liu},
  title        = {Safety-Enhanced Autonomous Driving Using Interpretable Sensor Fusion
                  Transformer},
  booktitle    = {CoRL},
  series       = {Proceedings of Machine Learning Research},
  volume       = {205},
  pages        = {726--737},
  publisher    = {{PMLR}},
  year         = {2022}
}

@inproceedings{HuCGMGYKCS22,
  author       = {Anthony Hu and
                  Gianluca Corrado and
                  Nicolas Griffiths and
                  Zachary Murez and
                  Corina Gurau and
                  Hudson Yeo and
                  Alex Kendall and
                  Roberto Cipolla and
                  Jamie Shotton},
  title        = {Model-Based Imitation Learning for Urban Driving},
  booktitle    = {NeurIPS},
  volume = {35},
  pages = {20703–20716},
  year         = {2022}
}

@InBook{KurakinGB18,
  author = 	 {A. Kurakin and I. J. Goodfellow and S. Bengio},
  title = 	 {Artificial intelligence safety and security},
  chapter = 	 {Adversarial examples in the physical world},
  publisher = 	 {Chapman and Hall/CRC},
  year = 	 {2018},
  pages = 	 {99-112}
}

@article{StoreyL0P22,
  author       = {Veda C. Storey and
                  Roman Lukyanenko and
                  Wolfgang Maass and
                  Jeffrey Parsons},
  title        = {Explainable {AI}},
  journal      = {Commun. {ACM}},
  volume       = {65},
  number       = {4},
  pages        = {27--29},
  year         = {2022}
}

@article{ChenLHXTLHTLWCZW23,
  author       = {Long Chen and
                  Yuchen Li and
                  Chao Huang and
                  Yang Xing and
                  Daxin Tian and
                  Li Li and
                  Zhongxu Hu and
                  Siyu Teng and
                  Chen Lv and
                  Jinjun Wang and
                  Dongpu Cao and
                  Nanning Zheng and
                  Fei{-}Yue Wang},
  title        = {Milestones in Autonomous Driving and Intelligent Vehicles - Part {I:}
                  Control, Computing System Design, Communication, {HD} Map, Testing,
                  and Human Behaviors},
  journal      = {{IEEE} Trans. Syst. Man Cybern. Syst.},
  volume       = {53},
  number       = {9},
  pages        = {5831--5847},
  year         = {2023}
}

@TechReport{ASAMOpenDRIVE-1.6.0,
  author =	 {},
  title =	 {{ASAM OpenDRIVE}\textregistered\ - Open Dynamic Road
                  Information for Vehicle Environment},
  institution =	 {ASAM e.V.},
  year =	 2020,
  number =	 {V~1.6.0},
  month =	 {Mar},
  url =		 {https://www.asam.net/standards/detail/opendrive}
}

@TechReport{ASAMOpenScenario-1.0.0,
  author =	 {},
  title =	 {{ASAM OpenScenario}\textregistered\ - Dynamic
                  content in driving simulation, {UML} Modeling Rules},
  institution =	 {ASAM e.V.},
  year =	 2020,
  number =	 {V~1.0.0},
  month =	 {Mar},
  url =		 {https://www.asam.net/standards/detail/openscenario}
}

@TechReport{Albus22,
  author =	 {James S. Albus and {et al.}},
  title =	 {{4D/RCS} Version 2.0: A Reference Model Architecture
                  for Unmanned Vehicle Systems},
  institution =	 {NIST Publication},
  year =	 2022,
  number =	 {NISTIR 6910},
  month =	 {August},
  url =		 {https://doi.org/10.6028/NIST.IR.6910},
}

@inproceedings{BeetzB18,
  author =	 {Jakob Beetz and Andr{\'{e}} Borrmann},
  title =	 {Benefits and Limitations of Linked Data Approaches
                  for Road Modeling and Data Exchange},
  booktitle =	 {{EG-ICE}},
  series =	 {Lecture Notes in Computer Science},
  volume =	 10864,
  pages =	 {245--261},
  publisher =	 {Springer},
  year =	 2018
}

@article{BenvenisteCNPRR18,
  author =	 {Albert Benveniste and Beno{\^{\i}}t Caillaud and
                  Dejan Nickovic and Roberto Passerone and
                  Jean{-}Baptiste Raclet and Philipp Reinkemeier and
                  Alberto L. Sangiovanni{-}Vincentelli and Werner Damm
                  and Thomas A. Henzinger and Kim G. Larsen},
  title =	 {Contracts for System Design},
  journal =	 {Found. Trends Electron. Des. Autom.},
  volume =	 12,
  number =	 {2-3},
  pages =	 {124--400},
  year =	 2018
}

@inproceedings{BozgaS22,
  author =	 {Marius Bozga and Joseph Sifakis},
  title =	 {Specification and Validation of Autonomous Driving
                  Systems: {A} Multilevel Semantic Framework},
  booktitle =	 {Principles of Systems Design},
  series =	 {Lecture Notes in Computer Science},
  volume =	 13660,
  pages =	 {85--106},
  publisher =	 {Springer},
  year =	 2022
}

@article{BozgaS23,
  author =	 {Marius Bozga and Joseph Sifakis},
  title =	 {Correct by design coordination of autonomous driving
                  systems},
  journal =	 {Int. J. Softw. Tools Technol. Transf.},
  volume =	 25,
  number =	 5,
  pages =	 {625--639},
  year =	 2023
}

@inproceedings{ButzHHORSZ20,
  author =	 {Martin Butz and Christian Heinzemann and Martin
                  Herrmann and Jens Oehlerking and Michael Rittel and
                  Nadja Schalm and Dirk Ziegenbein},
  title =	 {{SOCA:} Domain Analysis for Highly Automated Driving
                  Systems},
  booktitle =	 {{ITSC}},
  pages =	 {1--6},
  publisher =	 {{IEEE}},
  year =	 2020
}

@inproceedings{ChatterjeeH07,
  author =	 {Krishnendu Chatterjee and Thomas A. Henzinger},
  title =	 {Assume-Guarantee Synthesis},
  booktitle =	 {{TACAS}},
  series =	 {Lecture Notes in Computer Science},
  volume =	 4424,
  pages =	 {261--275},
  publisher =	 {Springer},
  year =	 2007
}

@Article{Cummings23,
  author =	 {Mary L. Cummings},
  title =	 {What Self-Driving Cars Tell Us About {AI} Risks},
  journal =	 {IEEE Spectrum},
  year =	 2023,
  url =
                  {https://spectrum.ieee.org/self-driving-cars-266249426},
  month =	 {July}
}

@article{abs-2206-03418,
  author       = {Ichiro Hasuo},
  title        = {Responsibility-Sensitive Safety: an Introduction with an Eye to Logical
                  Foundations and Formalization},
  journal      = {CoRR},
  volume       = {abs/2206.03418},
  year         = {2022}
}

@article{abs-1708-06374,
  author       = {Shai Shalev{-}Shwartz and
                  Shaked Shammah and
                  Amnon Shashua},
  title        = {On a Formal Model of Safe and Scalable Self-driving Cars},
  journal      = {CoRR},
  volume       = {abs/1708.06374},
  year         = {2017}
}

@inproceedings{Kress-GazitP08,
  author =	 {Hadas Kress{-}Gazit and George J. Pappas},
  title =	 {Automatically synthesizing a planning and control
                  subsystem for the {DARPA} urban challenge},
  booktitle =	 {{CASE}},
  pages =	 {766--771},
  publisher =	 {{IEEE}},
  year =	 2008
}

@inproceedings{MavridouKGKPW21,
  author =	 {Anastasia Mavridou and Andreas Katis and Dimitra
                  Giannakopoulou and David Kooi and Thomas Pressburger
                  and Michael W. Whalen},
  title =	 {From Partial to Global Assume-Guarantee Contracts:
                  Compositional Realizability Analysis in {FRET}},
  booktitle =	 {{FM}},
  series =	 {Lecture Notes in Computer Science},
  volume =	 13047,
  pages =	 {503--523},
  publisher =	 {Springer},
  year =	 2021
}

@article{Meyer92,
  author =	 {Bertrand Meyer},
  title =	 {Applying "Design by Contract"},
  journal =	 {Computer},
  volume =	 25,
  number =	 10,
  pages =	 {40--51},
  year =	 1992
}

@TechReport{NajmSY07,
  author =	 {Wassim G. Najm and John D. Smith and Mikio
                  Yanagisawa},
  title =	 {Pre-crash scenario typology for crash avoidance
                  research},
  institution =	 {National Highway Traffic Safety Administration, US},
  year =	 2007
}

@inproceedings{PoggenhansPJONK18,
  author =	 {Fabian Poggenhans and Jan{-}Hendrik Pauls and
                  Johannes Janosovits and Stefan Orf and Maximilian
                  Naumann and Florian Kuhnt and Matthias Mayr},
  title =	 {Lanelet2: {A} high-definition map framework for the
                  future of automated driving},
  booktitle =	 {{ITSC}},
  pages =	 {1672--1679},
  publisher =	 {{IEEE}},
  year =	 2018
}

@article{SaoudGF21,
  author =	 {Adnane Saoud and Antoine Girard and Laurent
                  Fribourg},
  title =	 {Assume-guarantee contracts for continuous-time
                  systems},
  journal =	 {Autom.},
  volume =	 134,
  pages =	 109910,
  year =	 2021
}

@Article{SchwartingAR18,
  author =	 {Wilko Schwarting and Javier Alonso-Mora and Daniela
                  Rus},
  title =	 {Planning and Decision-Making for Autonomous
                  Vehicles},
  journal =	 {Annual Review of Control, Robotics, and Autonomous
                  Systems},
  year =	 2018,
  volume =	 1,
  pages =	 {187–210},
  note =
                  {https://doi.org/10.1146/annurev-control-060117-105157}
}

@book{Staron21,
  author =	 {Miroslaw Staron},
  title =	 {Automotive Software Architectures - An Introduction,
                  Second Edition},
  publisher =	 {Springer},
  year =	 2021
}

@Article{Wenfei21,
  author =	 {Wenfei Li and Huiyun Li and Kun Xu and Zhejun Huang
                  and Ke Li and Haiping Du},
  title =	 {Estimation of Vehicle Dynamic Parameters Based on
                  the Two-Stage Estimation Method},
  journal =	 {Sensors},
  year =	 2021,
  volume = 21,
  number = 11,
  doi =		 {10.3390/s21113711}
}

@inproceedings{WongpiromsarnKF11,
  author =	 {Tichakorn Wongpiromsarn and Sertac Karaman and
                  Emilio Frazzoli},
  title =	 {Synthesis of provably correct controllers for
                  autonomous vehicles in urban environments},
  booktitle =	 {{ITSC}},
  pages =	 {1168--1173},
  publisher =	 {{IEEE}},
  year =	 2011
}

@article{WongpiromsarnTM12,
  author =	 {Tichakorn Wongpiromsarn and Ufuk Topcu and Richard
                  M. Murray},
  title =	 {Receding Horizon Temporal Logic Planning},
  journal =	 {{IEEE} Trans. Autom. Control.},
  volume =	 57,
  number =	 11,
  pages =	 {2817--2830},
  year =	 2012
}

@article{abs-1909-02070,
  author =	 {Minghui Sun and Georgios Bakirtzis and Hassan
                  Jafarzadeh and Cody Fleming},
  title =	 {Correct-by-construction: a contract-based
                  semi-automated requirement decomposition process},
  journal =	 {CoRR},
  volume =	 {abs/1909.02070},
  year =	 2019
}

@article{abs-2012-12657,
  author =	 {Miel Sharf and Bart Besselink and Adam Molin and
                  Qiming Zhao and Karl Henrik Johansson},
  title =	 {Assume/Guarantee Contracts for Dynamical Systems:
                  Theory and Computational Tools},
  journal =	 {CoRR},
  volume =	 {abs/2012.12657},
  year =	 2020
}

@article{abs-2306-16927,
  author =	 {Li Chen and Penghao Wu and Kashyap Chitta and
                  Bernhard Jaeger and Andreas Geiger and Hongyang Li},
  title =	 {End-to-end Autonomous Driving: Challenges and
                  Frontiers},
  journal =	 {CoRR},
  volume =	 {abs/2306.16927},
  year =	 2023,
  url =		 {https://doi.org/10.48550/arXiv.2306.16927}
}

@misc{arxiv.2203.14110,
  url =		 {https://arxiv.org/abs/2203.14110},
  author =	 {Waqas, Muhammad and Murtaza, Muhammad Ali and Nuzzo,
                  Pierluigi and Ioannou, Petros},
  keywords =	 {Systems and Control (eess.SY), FOS: Electrical
                  engineering, electronic engineering, information
                  engineering, FOS: Electrical engineering, electronic
                  engineering, information engineering},
  title =	 {Correct-By-Construction Design of Adaptive Cruise
                  Control with Control Barrier Functions Under Safety
                  and Regulatory Constraints},
  publisher =	 {arXiv},
  year =	 2022,
  copyright =	 {Creative Commons Attribution Non Commercial Share
                  Alike 4.0 International}
}

\end{document}